\newtheorem*{theorem*}{Theorem}
\newtheorem{theorem}{Theorem}[section]
\newtheorem{lemma}[theorem]{Lemma}
\newtheorem{example}[theorem]{Example}
\newtheorem{definition}[theorem]{Definition}
\newcommand{\mi}{\textsc{MI}}
\newcommand{\E}{\mathrm{E}}
\newcommand{\sign}{\mathrm{sgn}}
\newcommand{\modu}{\mathbf{Mod}}
\newcommand{\rk}{\mathrm{Rank}}
\title{Modularity and Mutual Information in Networks: Two Sides of the Same Coin}
\author{
Qian Wang$^1$\and
Yongkang Guo$^1$\and
Zhihuan Huang$^1$\And
Yuqing Kong$^1$\\
\affiliations
$^1$CFCS, Peking University\\
\emails
\{charlie, yongkang\_guo, zhihuan.huang, yuqing.kong\}@pku,edu,cn
}
\begin{document}

\maketitle

\begin{abstract}

Modularity, first proposed by \cite{newman2004finding}, is one of the most popular ways to quantify the significance of community structure in complex networks. It can serve as both a standard benchmark to compare different community detection algorithms, and an optimization objective to detect communities itself. Previous work on modularity has developed many efficient algorithms for modularity maximization. However, few of researchers considered the interpretation of the modularity function itself. In this paper, we study modularity from an information-theoretical perspective and show that modularity and mutual information in networks are essentially the same. The main contribution is that we develop a family of generalized modularity measures, $f$-modularity based on $f$-mutual information. $f$-Modularity has an information-theoretical interpretation, enjoys the desired properties of mutual information measure, and provides an approach to estimate the mutual information between discrete random variables. At a high level, we show the significance of community structure is equivalent to the amount of information contained in the network. The connection of $f$-modularity and $f$-mutual information bridges two important fields, complex network and information theory and also sheds light on the design of measures on community structure in future.
\end{abstract}

\section{Introduction}
\label{introduction}

Networks have been attracting considerable attention over the past few decades as a representation of real data in many complex system applications, including natural, social, and technological systems. One of the most important characteristics that have been found to occur commonly in these networks is community structure \cite{girvan2002community,fortunato2010community,porter2009communities,malliaros2013clustering,cherifi2019community}. It is a natural idea to partition a complex network into multiple modules or communities by grouping nodes into sets such that each set of nodes are densely connected internally while cross-group connections are sparse. Therefore, research on community structure occupies an important part in data processing and data analyzing.

One of the most widely used tools for analyzing community structure is modularity \cite{newman2004fast}. The modularity function is defined as $$Q=\sum_i (e_{ii}- a_i^2),$$ where $e_{ij}$ denotes the fraction of edges connecting community $i$ to community $j$ (hence $e_{ii}$ for the edges within community $i$) and $a_i=\sum_j e_{ij}$. In essence, modularity measures the ``distance'' of the community structure in the real network from a random network without any community structure, thus a higher modularity implies a clearer community structure. There has been extensive research on modularity-based techniques obtaining maximization algorithms with faster speed or higher accuracy. However, few work dived into the concept of modularity itself.

In this paper, we study modularity from an information-theoretical perspective and show that modularity and mutual information in networks are actually two sides of the same coin. By regarding the adjacency matrix of a network as the joint probability distribution of two discrete random variables, we observe an intuitive relation between mutual information and community structure. For example, if the adjacency matrix is a block diagonal matrix, there is naturally a good community partition with a high modularity for this network, while the value of mutual information is also high for this joint distribution at the same time.

Following this intuition, we start from $f$-mutual information~\cite{kong2019information}, a generalization of Shannon mutual information and derive a family of generalized modularity measures, $f$-modularity. For a given network, when we consider the adjacency matrix as the joint probability matrix of two discrete random variables, maximizing $f$-modularity is equivalent to approximating $f$-mutual information. By substituting different convex functions $f$, we can get different instances of $f$-modularity. Actually, by picking a particular $f$ and adding some constraints, we will show that the original definition of Newman's modularity is a special case of ours. For other commonly used smooth convex functions, $f$-modularity surpasses Newman's modularity due to it being differentiable. 

The main contribution of this paper is proposing $f$-modularity, which implies a strong connection between modularity and mutual information in networks. The key insight is that the significance of community structure equals the amount of information contained in the network. $f$-modularity has an information-theoretical interpretation, enjoys the desired properties of mutual information measure, and provides an approach to estimate the mutual information between discrete random variables (Section \ref{f-modularity}). We validate our theoretical results by experiments in Section \ref{experiments}.

\section{Related Work}
\label{related_work}
Modularity was first proposed by \cite{newman2004finding} as a stop criterion for another community detection algorithm. Then in the same year, \cite{newman2004fast} proposed an alternative community detection approach directly based on modularity maximization. They chose an greedy-based approximation algorithm in order to reduce time overhead. Later, modularity maximization was formally proved to be an NP-complete problem by \cite{brandes2007modularity}. 

Following the seminal work of Newman, many approximate optimization methods for modularity maximization were developed, offering different balances between lower complexity and higher accuracy \cite{cherifi2019community}. Some researchers also aimed to address the shortcomings of modularity by proposing new metrics similar to modularity \cite{muff2005local,haq2019community}. Instead of pursuing a better algorithm or making slight modifications to the original definition of modularity, we starts from $f$-mutual information, entirely another concept, to derive a generalized modularity, which includes Newman's modularity as a special case.

As far as we know, all previous work involving both Newman’s modularity and mutual information was related to \textit{normalized mutual information} (NMF) \cite{danon2005comparing}. NMF takes the partitions of the network as random variables while $f$-mutual information in our context takes the edges of the network as random variables. \cite{danon2005comparing} first used NMF as a standard benchmark to compare different approaches for community detection, so later work on modularity just accepted it as a metric. Compared with them, we build a more solid mathematical connection between modularity and mutual information.


There is a only limit of literature looking into modularity itself and establishing its correspondences to other fields, but with no relevance to mutual information. \cite{zhang2014scalable,newman2016equivalence,veldt2018correlation} showed the equivalence between modularity and the maximum likelihood formulation of the degree-corrected stochastic block models (SBM). \cite{masuda2017random} showed modularity is closely related to Markov stability in the random walk model. \cite{chang2018approximate} discussed the relation between modularity maximization and non-negative matrix factorization (NMF). Recently, \cite{young2018universality} established the universality of the stochastic block model and showed all the problems where we partition the network by maximizing some objective function, are equivalent including modularity maximization.

\section{$f$-Modularity}
\label{f-modularity}
In this section, we will provide the definition of $f$-modularity based on the dual form of $f$-mutual information. With different convex functions $f$ and constraint sets $C$, we can get different instances of $f$-modularity, one of which corresponds to Newman's modularity. Finally, we illustrate an approximation algorithm to optimize $f$-modularity.

\subsection{Frequency Matrix and Random Matrix}

This subsection formally describes our setting for networks. First, we define the frequency matrix of a network as follows.

\begin{definition}[Frequency Matrix $\mathbf{F}$]
\label{frequency matrix}
Given a bipartite multigraph $G=\left<U, V, E\right>$ and its biadjacency matrix $\mathbf{B}$, where $B_{u,v}$ is the number of edges between $u\in U$ and $v\in V$, the frequency matrix is defined as $\mathbf{F}=\mathbf{B}/N$ where $N=\sum_{u, v}B_{u, v}$ is the total number of edges. For a non-bipartite multigraph $G=\left<V, E\right>$, given its adjacency matrix $\mathbf{A}$, we define $\mathbf{F}=\mathbf{A}/N$ where $N=\sum_{u, v\in V}A_{u, v}$.
\end{definition}

\begin{example}[Seller-Buyer]
$U$ is a set of sellers and $V$ is a set of buyers. For $N$ contracts, $B_{u,v}$ is the number of times that seller $u$ trades with buyer $v$ and $F_{u, v} = B_{u, v}/N$. 
\end{example}

Note that our work can be applied to both bipartite networks and non-bipartite networks, but we will mainly focus on bipartite networks since any adjacency matrix of a non-bipartite multigraph can be induced to the biadjacency matrix of a bipartite multigraph (see Figure~\ref{induction} for illustration). In the following context, a ``network'', if not particularly indicated, refers to a bipartite multigraph.

\begin{figure}[htpb]
    \centering
    \subcaptionbox{\label{nonbipartite}}{
    \begin{tikzpicture}
            \node[circle, draw, inner sep=0cm, minimum size=.5cm] (v1) at (-0.5, 1.3) {$1$};            
            \node[circle, draw, inner sep=0cm, minimum size=.5cm] (v2) at (0.5, 1.3) {$2$};
            \node[circle, draw, inner sep=0cm, minimum size=.5cm] (v3) at (1, 0) {$3$};
            \node[circle, draw, inner sep=0cm, minimum size=.5cm] (v4) at (0, -1.3) {$4$};
            \node[circle, draw, inner sep=0cm, minimum size=.5cm] (v5) at (-1, 0) {$5$};
            \draw[thick] (v1) to (v2);
            \draw[thick] (v2) to (v3);
            \draw[thick] (v1) to (v3);
            \draw[thick] (v2) to (v4);
            \draw[thick] (v3) to (v5);
            \draw[thick] (v4) to (v5);
    \end{tikzpicture}
    }
    \subcaptionbox{\label{bipartite}}{
    \begin{tikzpicture}
            \node[circle, draw, inner sep=0cm, minimum size=.5cm] (u1) at (-0.75, 1.3) {$u_1$};            
            \node[circle, draw, inner sep=0cm, minimum size=.5cm] (u2) at (-0.75, 0.65) {$u_2$};
            \node[circle, draw, inner sep=0cm, minimum size=.5cm] (u3) at (-0.75, 0) {$u_3$};
            \node[circle, draw, inner sep=0cm, minimum size=.5cm] (u4) at (-0.75, -0.65) {$u_4$};
            \node[circle, draw, inner sep=0cm, minimum size=.5cm] (u5) at (-0.75, -1.3) {$u_5$};
            \node[circle, draw, inner sep=0cm, minimum size=.5cm] (v1) at (0.75, 1.3) {$v_1$};            
            \node[circle, draw, inner sep=0cm, minimum size=.5cm] (v2) at (0.75, 0.65) {$v_2$};
            \node[circle, draw, inner sep=0cm, minimum size=.5cm] (v3) at (0.75, 0) {$v_3$};
            \node[circle, draw, inner sep=0cm, minimum size=.5cm] (v4) at (0.75, -0.65) {$v_4$};
            \node[circle, draw, inner sep=0cm, minimum size=.5cm] (v5) at (0.75, -1.3) {$v_5$};
            \draw[thick] (u1) to (v2);
            \draw[thick] (u1) to (v3);
            \draw[thick] (u2) to (v1);
            \draw[thick] (u2) to (v3);
            \draw[thick] (u2) to (v4);
            \draw[thick] (u3) to (v1);
            \draw[thick] (u3) to (v2);
            \draw[thick] (u3) to (v5);
            \draw[thick] (u4) to (v2);
            \draw[thick] (u4) to (v5);
            \draw[thick] (u5) to (v3);
            \draw[thick] (u5) to (v4);
    \end{tikzpicture}
    }
    \subcaptionbox{\label{egmatrix}}{
    $\begin{pmatrix}
    0 & 1 & 1 & 0 & 0 \\
    1 & 0 & 1 & 1 & 0 \\
    1 & 1 & 0 & 0 & 1 \\
    0 & 1 & 0 & 0 & 1 \\
    0 & 0 & 1 & 1 & 0
    \end{pmatrix}$
    }
    \caption{The adjacency matrix of the non-bipartite graph in (\subref{nonbipartite}) is exactly the same with the biadjacency matrix of the bipartite graph in (\subref{bipartite}), i.e., the matrix in (\subref{egmatrix}). As the biadjacency matrix is symmetric in this situation, any rational algorithm should deliver a symmetric partition result for the bipartite graph (\subref{bipartite}). For example, if the communities turn out to be $\{u_1, u_2, u_3, v_1, v_2, v_3\}$ and $\{u_4, u_5, v_4, v_5\}$ in the bipartite graph (\subref{bipartite}), then we know the communities in the non-bipartite graph (\subref{nonbipartite}) is $\{1, 2, 3\}$ and $\{4, 5\}$. Therefore, we only need to consider bipartite networks. }
    \label{induction}
\end{figure}
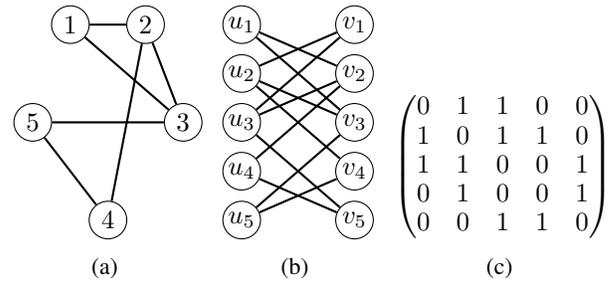

Next, we will define the random matrix (\textit{aka} the null model). At a high level, modularity is a measure for the distance from the real network to the random network.

\begin{definition}[Random Matrix $\mathbf{J}$]
\label{random matrix}
Given a bipartite multigraph $G=\left<U, V, E\right>$ and its frequency matrix $\mathbf{F}$, the random matrix $\mathbf{J}$ is defined as $$J_{u,v}=(\deg(u)\deg(v)-\frac{1}{N}F_{u,v})\frac{N}{N-1},$$ where $\deg(u)=\sum_{j\in V} F_{u,j}$ is the normalized degree of vertex $u$ and similarly, $\deg(v)=\sum_{i\in U} F_{i,v}$.
\end{definition}

If the $N$ edges of $\mathbf{G}$ is $N$ i.i.d. samples of random variables $(X,Y)$ whose realizations are in $(U,V)$, we have
\begin{align*}
\E[F_{u,v}]\quad&=\Pr[X=u,Y=v],\\
\E[\deg(u)]&=\Pr[X=u], \\
\E[\deg(v)]&= \Pr[Y=v].
\end{align*}
The following Lemma \ref{lem:mar} tells us, $\mathbf{J}$ in Definition \ref{random matrix} is an unbiased estimation of $\Pr[X=u]\Pr[Y=v]$.

\begin{lemma}\label{lem:mar}
$\E[J_{u,v}]=\Pr[X=u]\Pr[Y=v]$
\end{lemma}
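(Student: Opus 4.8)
The plan is to expand everything back into indicator functions over the $N$ i.i.d.\ samples $(X_1,Y_1),\dots,(X_N,Y_N)$ and then separate the contribution of a single sample (the ``diagonal'' term) from that of distinct samples (the ``off-diagonal'' term), where independence across samples does the work. First I would write $F_{u,v}=\frac1N\sum_{k=1}^N\indi[X_k=u,Y_k=v]$, so that $\deg(u)=\sum_{j\in V}F_{u,j}=\frac1N\sum_{k=1}^N\indi[X_k=u]$ and likewise $\deg(v)=\frac1N\sum_{k=1}^N\indi[Y_k=v]$; this immediately recovers the three stated identities $\E[F_{u,v}]=\Pr[X=u,Y=v]$, $\E[\deg(u)]=\Pr[X=u]$, $\E[\deg(v)]=\Pr[Y=v]$.

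Next I would multiply the two degree sums and split the resulting double sum by whether the indices coincide:
\begin{align*}
\deg(u)\deg(v)&=\frac{1}{N^2}\sum_{k=1}^N\sum_{l=1}^N\indi[X_k=u]\,\indi[Y_l=v]\\
&=\frac{1}{N^2}\sum_{k=1}^N\indi[X_k=u,Y_k=v]+\frac{1}{N^2}\sum_{k\neq l}\indi[X_k=u]\,\indi[Y_l=v],
\end{align*}
where the first sum equals $\frac1N F_{u,v}$ because $\indi[X_k=u]\,\indi[Y_k=v]=\indi[X_k=u,Y_k=v]$. Taking expectations, for each of the $N(N-1)$ ordered pairs with $k\neq l$ the samples $(X_k,Y_k)$ and $(X_l,Y_l)$ are independent, hence $\E[\indi[X_k=u]\,\indi[Y_l=v]]=\Pr[X=u]\Pr[Y=v]$. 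Therefore
\[
\E\!\left[\deg(u)\deg(v)\right]=\tfrac1N\,\E[F_{u,v}]+\tfrac{N-1}{N}\Pr[X=u]\Pr[Y=v].
\]

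Finally I would subtract $\frac1N\E[F_{u,v}]$ from both sides, so that $\E[\deg(u)\deg(v)-\frac1N F_{u,v}]=\frac{N-1}{N}\Pr[X=u]\Pr[Y=v]$, and multiply through by $\frac{N}{N-1}$ to obtain $\E[J_{u,v}]=\Pr[X=u]\Pr[Y=v]$, as claimed. The only point needing care — and really the conceptual content of the lemma — is the diagonal term: within a single sample $X_k$ and $Y_k$ are \emph{not} independent, so $\E[\deg(u)\deg(v)]$ is not simply $\Pr[X=u]\Pr[Y=v]$, and it is precisely the $-\frac1N F_{u,v}$ correction together with the $\frac{N}{N-1}$ rescaling in Definition~\ref{random matrix} that cancels this bias. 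Beyond this bookkeeping the argument is routine, so I do not anticipate a substantive obstacle.
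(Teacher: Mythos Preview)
Your proposal is correct and follows essentially the same route as the paper: both write the degrees as averages of indicators over the $N$ i.i.d.\ samples, expand the product $\deg(u)\deg(v)$ into a double sum, split it into the diagonal term $\frac{1}{N}F_{u,v}$ and the $N(N-1)$ off-diagonal terms where independence gives $\Pr[X=u]\Pr[Y=v]$, and then observe that the definition of $J_{u,v}$ exactly removes the diagonal bias. Your presentation is in fact slightly cleaner than the paper's (you use $\indi[X_k=u]$ directly rather than $\sum_j\mathrm{I}_n(u,j)$), but the argument is the same.
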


\begin{proof}[Proof of Lemma~\ref{lem:mar}]
Let 
\[
   \mathrm{I}_n(u, v)=\left\{\begin{array}{ll}
        1 & \text{if the }n\text{-th edge is }(u, v), \\
        0 & \text{otherwise}.
    \end{array}\right.
\]
Then
\begin{align*}
    &\sum_i F_{i,v} \sum_j F_{u,j} \\
    =& \frac{1}{N^2} \left(\sum_{i,n} \mathrm{I}_{n}(i,v)\right) \left(\sum_{j,n} \mathrm{I}_n(u,j)\right) \\
    =& \frac{1}{N^2} \left( \sum_n \mathrm{I}_n(u,v) + \sum_{n\neq n'} \sum_{i,j} \mathrm{I}_{n}(i,v) \mathrm{I}_{n'}(u,j)  \right).
\end{align*}
We get $$\E[\deg(u)\deg(v)] = \frac{1}{N} \Pr(u,v) + \frac{N-1}{N} \Pr(u) \Pr(v), $$ thus $\E[J_{u,v}]=\Pr[X=u]\Pr[Y=v]$.
\end{proof}


\subsection{$f$-Divergence and $f$-Mutual Information}
This subsection introduces $f$-divergence, Fenchel's duality and $f$-mutual information. These are the main technical ingredients for defining $f$-modularity.


\begin{definition}[$f$-Divergence \cite{ali1966general}]\label{def:df}
$f$-Divergence $d_f:\Delta_{\Sigma}\times \Delta_{\Sigma}\mapsto \mathbb{R}$ is a non-symmetric measure of the difference between distribution $\mathbf{p}\in \Delta_{\Sigma} $ and distribution $\mathbf{q}\in \Delta_{\Sigma} $
and is defined to be $$d_f(\mathbf{p};\mathbf{q})=\sum_{\sigma\in \Sigma}
\mathbf{q}(\sigma)f\bigg( \frac{\mathbf{p}(\sigma)}{\mathbf{q}(\sigma)}\bigg),$$
where $f:\mathbb{R}\mapsto\mathbb{R}$ is a convex function and $f(1)=0$.
\end{definition}

As an example, by picking $f(t)=t\log(t)$, we get KL-divergence $d_{KL}(\mathbf{p},\mathbf{q})=\sum_{\sigma}\mathbf{p}(\sigma)\log\frac{\mathbf{p}(\sigma)}{\mathbf{q}(\sigma)}$.


\begin{definition}[Fenchel Duality \cite{rockafellar1966extension}]\label{def:dual}
Given any function $f:\mathbb{R}\mapsto \mathbb{R}$, we define its convex conjugate $f^{\star}$ as a function that also maps $\mathbb{R}$ to $\mathbb{R}$ such that $$f^{\star}(x)=\sup_{t} tx-f(t).$$
\end{definition}

\begin{lemma}[Dual Form of $f$-Divergence\cite{nguyen2010estimating}]\label{lemma:dual}
\begin{align*}
    d_f(\mathbf{p};\mathbf{q}) \geq & \sup_{u \in \mathcal{U}} \sum_{\sigma\in \Sigma}u(\sigma) \mathbf{p}(\sigma)- \sum_{\sigma\in \Sigma}f^{\star}(u(\sigma))\mathbf{q}(\sigma) \\
    = & \sup_{u \in \mathcal{U}} \E_{\mathbf{p}} u- \E_{\mathbf{q}}f^{\star}(u),
\end{align*}
where $\mathcal{U}$ is a set of functions that maps $\Sigma$ to $\mathbb{R}$. The equality holds if and only if $u(\sigma)\in \partial{f}(\frac{\mathbf{p}(\sigma)}{\mathbf{q}(\sigma)})$, i.e., the subdifferential of $f$ on value $\frac{\mathbf{p}(\sigma)}{\mathbf{q}(\sigma)}$.  
\end{lemma}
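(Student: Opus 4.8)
The plan is to obtain both the inequality and its equality condition from a single pointwise application of the Fenchel--Young inequality. Recall that, directly from Definition~\ref{def:dual}, the supremum defining $f^{\star}$ gives $f^{\star}(x)\ge tx-f(t)$, i.e.
\[
f(t)+f^{\star}(x)\ge tx \qquad \text{for all } t,x\in\mathbb{R},
\]
and, since $f$ is convex, this holds with equality exactly when $x\in\partial f(t)$.

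First I would instantiate this at $t=\mathbf{p}(\sigma)/\mathbf{q}(\sigma)$ and $x=u(\sigma)$ for each $\sigma\in\Sigma$ and each $u\in\mathcal{U}$, giving
\[
f\!\left(\frac{\mathbf{p}(\sigma)}{\mathbf{q}(\sigma)}\right)\ \ge\ u(\sigma)\,\frac{\mathbf{p}(\sigma)}{\mathbf{q}(\sigma)}-f^{\star}\bigl(u(\sigma)\bigr).
\]
Multiplying by $\mathbf{q}(\sigma)\ge 0$ (which preserves the direction of the inequality) and summing over $\sigma\in\Sigma$, the left side becomes $d_f(\mathbf{p};\mathbf{q})$ by Definition~\ref{def:df}, and the right side becomes $\sum_{\sigma}u(\sigma)\mathbf{p}(\sigma)-\sum_{\sigma}f^{\star}(u(\sigma))\mathbf{q}(\sigma)=\E_{\mathbf{p}}u-\E_{\mathbf{q}}f^{\star}(u)$, where the last equality is just the definition of the expectations. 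Since the resulting bound holds for every $u\in\mathcal{U}$, taking $\sup_{u\in\mathcal{U}}$ on the right yields the stated inequality.

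For the equality claim, note that the only inequality invoked is the pointwise Fenchel--Young step, so the aggregated bound is tight precisely when that step is tight coordinatewise, i.e. when $u(\sigma)\in\partial f(\mathbf{p}(\sigma)/\mathbf{q}(\sigma))$ for every $\sigma$ with $\mathbf{q}(\sigma)>0$; if $\mathcal{U}$ contains such a selector $u^{\star}$ then the supremum is attained and the inequality becomes an identity.

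The conceptual content is thus a one-line duality argument; the only place that needs care is the degenerate bookkeeping: terms with $\mathbf{q}(\sigma)=0$ must be read through the usual convention $0\cdot f(p/0)=p\,\lim_{t\to\infty}f(t)/t$, and one should note that assuming $f$ is convex, proper, and lower semicontinuous (which is standard for the $f$ generating an $f$-divergence) is exactly what makes $f^{\star\star}=f$ and legitimizes the subdifferential equality condition. I expect this measure-zero bookkeeping, rather than any step in the main argument, to be the only obstacle.
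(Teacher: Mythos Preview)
The paper does not actually prove Lemma~\ref{lemma:dual}; it is stated with a citation to \cite{nguyen2010estimating} and used as a black box. Your proposal supplies the standard proof via the pointwise Fenchel--Young inequality, which is correct and is essentially the argument in the cited reference, so there is nothing to compare against within the paper itself.
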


Function $u$ is a distinguisher between distribution $\mathbf{p}$ and distribution $\mathbf{q}$ and the best distinguisher $u^*$ (if not restricted by $\mathcal{U}$) maximizes the right side to be $d_f(\mathbf{p};\mathbf{q})$. With the above lemma, we can also write the dual form as 
\[
d_f(\mathbf{p};\mathbf{q}) = \sup_{D\in \mathcal{D}} \E_{\mathbf{p}} \partial{f}(D)- \E_{\mathbf{q}}f^{\star}(\partial{f}(D))
\]
where $\mathcal{D}$ is a set of functions that maps $\Sigma$ to $\mathbb{R}$ and the best $D^*$ satisfies $D^*(\sigma)=\frac{\mathbf{p}(\sigma)}{\mathbf{q}(\sigma)}$. Some common $f$ functions for $f$-divergence and their dual forms are shown in Table~\ref{table:distinguishers}.

\begin{table*}[htp]
\caption{Reference Table for $f,f^{\star}$ \protect\cite{kong2019information} }
\begin{center}
\begin{tabular}{llll}
\toprule
    {$f$-Divergence} & {$f(t)$} & {$\partial{f}(D)$} & {$f^{\star}(\partial{f}(D)$)} \\ \midrule\midrule
    Total Variation Distance  & $|t-1|$  & $\sign(\log D)$ & $\sign(\log D)$ \\ 
    \midrule
    KL-Divergence & $t\log t$  & $\log D + 1$ & $D$ \\ 
    \midrule
    Pearson $\chi^2$ & $(t-1)^2$  & $2(D-1)$ & $D^2-1$ \\ 
    \midrule
    Jensen-Shannon & $-(t+1)\log{\frac{t+1}{2}}+t\log t$  & $\log{\frac{2D}{1+D}}$ & $-\log(\frac{2}{1+D})$ \\
    \midrule
    Squared Hellinger &$(\sqrt{t}-1)^2$ & $1-\sqrt{\frac{1}{D}}$ & $\sqrt{D}-1$ \\
    \bottomrule
\end{tabular}
\end{center}
\label{table:distinguishers}
\end{table*}

\begin{definition}[$f$-Mutual Information \cite{kong2019information}]\label{def:mif}
Given two random variables $X,Y$, the $f$-mutual information between $X$ and $Y$ is defined as
\begin{align*}
    \mi^f(X;Y)=&d_f(XY;X\otimes Y)\\
    =&\sum_{x,y}\Pr(x)\Pr(y)f\left(\frac{\Pr(x,y)}{\Pr(x)\Pr(y)}\right)
\end{align*} where $d_f$ is $f$-divergence. 
\end{definition}

$f$-Mutual information measures the correlation of two random variables $X$ and $Y$ via $f$-divergence between the joint distribution, denoted as $XY$, and the product of the marginal distributions, denoted as $X\otimes Y$. As an example, by picking $f$-divergence as KL-divergence, i.e., $f(t)=t\log t$  \cite{cover1999elements}, we get the classic Shannon mutual information,
\[
    \mi^{Shannon}(X;Y)=\sum_{x,y}\Pr(x, y)\log \left(\frac{\Pr(x,y)}{\Pr(x)\Pr(y)}\right).
\]

\begin{lemma}[Properties of $f$-Mutual Information \cite{kong2019information}]
\label{properties}
$f$-Mutual information satisfies
\begin{description}
\item [Symmetry:] $\mi^f(X;Y)=\mi^f(Y;X)$;
\item [Non-negativity:] $\mi^f(X;Y)$ is always non-negative and is 0 if $X$ is independent of $Y$;
\item [Information Monotonicity:] $\mi^f(T(X);Y)\leq \mi^f(X;Y)$ where $T(\cdot)\in \mathbb{R}^{|\Sigma_X|\times |\Sigma_X|}$ is a possibly random operator on $X$ whose randomness is independent of $Y$.
\end{description}
\end{lemma}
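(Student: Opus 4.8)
The plan is to derive all three properties from the corresponding facts about $f$-divergence, leaning on Definition~\ref{def:mif} and on the dual form in Lemma~\ref{lemma:dual}. \emph{Symmetry} is immediate from the closed form in Definition~\ref{def:mif}: the summand $\Pr(x)\Pr(y)f\!\left(\Pr(x,y)/(\Pr(x)\Pr(y))\right)$ is invariant under exchanging the roles of $x$ and $y$ (multiplication is commutative and $\Pr(x,y)=\Pr(y,x)$) and the sum ranges over the same index set, so $\mi^f(X;Y)=\mi^f(Y;X)$. For \emph{non-negativity}, apply Jensen's inequality to the convex function $f$ with the weights $\Pr(x)\Pr(y)$, which sum to $1$: $\mi^f(X;Y)=\sum_{x,y}\Pr(x)\Pr(y)f\!\left(\frac{\Pr(x,y)}{\Pr(x)\Pr(y)}\right)\ge f\!\left(\sum_{x,y}\Pr(x,y)\right)=f(1)=0$. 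When $X$ and $Y$ are independent, $\Pr(x,y)=\Pr(x)\Pr(y)$ for all $x,y$, so every argument of $f$ equals $1$ and every summand is $f(1)=0$, giving $\mi^f(X;Y)=0$.

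For \emph{information monotonicity} — the substantive part — the first step is to check that pushing $X$ through the (possibly random) operator $T$ acts as a single Markov kernel $M$ on \emph{both} arguments of the $f$-divergence. Writing $T_{x,x'}=\Pr(T(x)=x')$, the independence of $T$'s internal randomness from $Y$ gives $\Pr(T(X)=x',Y=y)=\sum_x T_{x,x'}\Pr(x,y)$ and $\Pr(T(X)=x')\Pr(y)=\sum_x T_{x,x'}\Pr(x)\Pr(y)$; that is, the kernel $M$ (apply $T$ to the first coordinate, identity on the second) sends the joint law to the new joint law and the product law to the new product law. Hence $\mi^f(T(X);Y)=d_f(M\,XY;\,M\,(X\otimes Y))$ while $\mi^f(X;Y)=d_f(XY;X\otimes Y)$, and the claim reduces to the data-processing inequality $d_f(M\mathbf p;M\mathbf q)\le d_f(\mathbf p;\mathbf q)$.

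I would prove this last inequality using the dual form already available in Lemma~\ref{lemma:dual}, which avoids invoking joint convexity of $d_f$ separately. Let $u'$ be an optimal distinguisher on $\Sigma_{T(X)}\times\Sigma_Y$ realizing the equality $d_f(M\mathbf p;M\mathbf q)=\E_{M\mathbf p}u'-\E_{M\mathbf q}f^{\star}(u')$, and pull it back to $u(x,y):=\sum_{x'}T_{x,x'}u'(x',y)$. Linearity of expectation gives $\E_{\mathbf p}u=\E_{M\mathbf p}u'$, while convexity of the conjugate $f^{\star}$ together with Jensen's inequality applied to the stochastic weights $T_{x,\cdot}$ gives $f^{\star}(u(x,y))\le\sum_{x'}T_{x,x'}f^{\star}(u'(x',y))$, hence $\E_{\mathbf q}f^{\star}(u)\le\E_{M\mathbf q}f^{\star}(u')$. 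Combining these, $d_f(\mathbf p;\mathbf q)\ge\E_{\mathbf p}u-\E_{\mathbf q}f^{\star}(u)\ge\E_{M\mathbf p}u'-\E_{M\mathbf q}f^{\star}(u')=d_f(M\mathbf p;M\mathbf q)$, which is exactly information monotonicity.

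The main obstacle is the monotonicity claim, and within it two points deserve care: (i) correctly modeling the possibly-random operator $T$ as a Markov kernel and confirming it acts identically on the joint and on the product-of-marginals distributions — this is precisely where the hypothesis that $T$'s randomness is independent of $Y$ enters — and (ii) establishing the data-processing inequality for $f$-divergence, for which the dual-form pullback above (with the single Jensen step on $f^{\star}$) is the cleanest route given what the paper has already set up. Symmetry and non-negativity are each essentially one line.
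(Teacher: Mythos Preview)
The paper does not supply its own proof of this lemma: it is stated with a citation to \cite{kong2019information} and used as a black box, so there is no ``paper's proof'' to compare against. Your argument is a correct, self-contained derivation of all three items. Symmetry and non-negativity are handled exactly as one would expect (symmetry of the summand, Jensen with weights $\Pr(x)\Pr(y)$ and $f(1)=0$). For information monotonicity, your reduction to the data-processing inequality for $f$-divergence is the standard route; modeling the possibly-random $T$ as a kernel $T_{x,x'}=\Pr(T(X)=x'\mid X=x)$ and observing that the \emph{same} kernel $M$ (apply $T$ on the first coordinate, identity on the second) pushes both $XY$ and $X\otimes Y$ forward is exactly where the independence-from-$Y$ hypothesis is used, and you flag this correctly. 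The dual-form proof of the DPI via pulling back an (approximately) optimal distinguisher and applying Jensen to the convex conjugate $f^{\star}$ is clean and relies only on Lemma~\ref{lemma:dual} already in the paper; the only cosmetic caveat is that the supremum in Lemma~\ref{lemma:dual} need not be attained, so strictly one takes an $\epsilon$-optimal $u'$ and lets $\epsilon\to 0$, but this does not affect the argument.
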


\subsection{$f$-Modularity}
\label{definefmodularity}
By Lemma \ref{lemma:dual}, we have the dual form of $f$-Mutual information, 
\begin{align*}
\mi^f(X;Y)= \sup_D \E_{XY} \partial{f}(D) -\E_{X\otimes Y}f^*(\partial{f}(D))\\
= \sum_{u,v} \partial{f}(D_{u,v})\Pr(u, v) - f^*\left(\partial{f}(D_{u,v})\right) \Pr(u)\Pr(v),
\end{align*}
which inspires the definition of $f$-modularity.

\begin{definition}[$f$-Modularity]
Given a bipartite mutligraph $\mathbf{G}$ whose frequency matrix is $\mathbf{F}\in[0,1]^{|U|\times |V|}$, with a constraint set C, the $f$-modularity of $\mathbf{G}$ is defined as 
\[ \modu^f(\mathbf{G})=\max_{\mathbf{D}\in C} \sum_{u,v}\left[\partial{f}(D_{u,v}) F_{u,v}- f^{\star} (\partial{f}(D_{u,v}))J_{u,v} \right].\]
\end{definition}

The matrix $\mathbf{D}$ is a distinguisher that aims to separate the frequency matrix $\mathbf{F}$ and the random matrix $\mathbf{J}$. Thus, $f$-modularity quantifies not only the amount of information in the graph, but also the concept of community structure by measuring the statistical distance between $\mathbf{F}$ and $\mathbf{J}$. Meanwhile, it inherits the information-theoretical properties from $f$-mutual information, validated in Section \ref{experiments}.

Note we regard the real network $\mathbf{G}$ as a noisy realization of the underlying joint distribution, so the constraint set $C$ in the above definition controls the robustness of $f$-modularity. If $C$ is too rich, the robustness will be hurt and if $C$ is too restricted, $\mathbf{F}$ and $\mathbf{J}$ may not be separated properly.

\paragraph{Instances of $f$-Modularity.} We provide several special instances of $f$-modularity by picking different $f$ from Table \ref{table:distinguishers}.

\begin{example}[TVD-Modularity]
\label{tvd}
When $f(t)=|t-1|$ and replacing $\sign(\log(D_{u,v}))$ by $S_{u,v}\in\{-1,1\}$, we obtain
\[ 
\modu^{tvd}(\mathbf{G})=\max_{\mathbf{S}\in\{-1,1\}^{|U|\times|V|}} \sum_{u,v} S_{u,v}\left( F_{u,v}- J_{u,v} \right).
\]
\end{example}

\begin{example}[KL-Modularity]
When $f(t)=t\log t$, 
\[
\modu^{KL}(\mathbf{G})=\max_{\mathbf{D}\in C } \sum_{u,v}\left((\log(D_{u,v})+1) F_{u,v}-D_{u,v}J_{u,v} \right).
\]
\end{example}

\begin{example}[Pearson-Modularity]
\label{pearson}
When $f(t)=(t-1)^2$, 
\[
\modu^{Pearson}(\mathbf{G})= -1 + \max_{\mathbf{D}\in C} \sum_{u,v} \left(2 D_{u,v} F_{u,v} - D_{u,v}^2J_{u,v} \right).
\]
\end{example}

Now we consider the low-rank constraint $\rk(\mathbf{D})\leq r$, which comes naturally when someone is going to restrict a matrix composed of real-world data. With the constraint set $C=\{\mathbf{D}=\mathbf{P} \mathbf{Q}^{\top}|\mathbf{P}\in\mathbb{R}^{|U|\times r},\mathbf{Q}\in\mathbb{R}^{|V|\times r}\}$, the definition can be rewritten as
\begin{align*}
    \modu^f(\mathbf{G})=\max_{\mathbf{P},\mathbf{Q}} \sum_{u,v} & \left[ \partial{f}(\mathbf{P}_{u}\mathbf{Q}^{\top}_{v}) F_{u,v} \right. \\
     & \left. - f^{\star} (\partial{f}(\mathbf{P}_{u}\mathbf{Q}^{\top}_{v}))J_{u,v} \right].
\end{align*}
where $\mathbf{P}_{u}$ denotes the $u^{th}$ row of matrix $\mathbf{P}$ (similar for $\mathbf{Q}_{v}$). As for a non-bipartite network, the constraint can be picked as $C=\{\mathbf{D}=\mathbf{P} \mathbf{P}^{\top}|\mathbf{P}\in\mathbb{R}^{|U|\times r}\}$ due to the symmetry. 

With the $r$-rank constraint, maximizing $f$-modularity in fact finds the optimal embedding of vertices in $U$ and $V$ in $r$-dimensional space $\mathbb{R}^r$, $\mathbf{P}^*$ and $\mathbf{Q}^*$, which is actually equivalent to detecting $r$ communities with overlapping, where one vertex can belong to more than one community. Interestingly, with a special constraint, TVD-modularity will exactly lead to the original definition of modularity \cite{newman2006finding} and output a fairly good partition without overlapping. 

\paragraph{Newman's modularity $\approx$ TVD-Modularity.} For a non-biapartite network $G=\left<V, E\right>$, we denote $g_v$ to be the community to which vertex $v\in V$ belongs. With the notations of our model, Newman's modularity can be written as
$$Q=\sum_{u, v}(F_{u,v}-J_{u, v}')\delta(g_u, g_v),$$
where $J_{u, v}' = \deg(u)\deg(v)$, and 
\[
    \delta(a, b)=\left\{\begin{array}{ll}
        1 & a=b, \\
        0 & \text{otherwise}.
    \end{array}\right.
\]
Although Newman used a biased estimation of $\Pr[X=u, Y=v]$, notice that $\lim_{N\to \infty}J_{u,v}' = J_{u, v}$.

Let us consider the division of a non-bipartite network into just two communities as the greedy algorithm described in \cite{newman2004fast,newman2006modularity}. Let the index vector $\mathbf{s}$ be 
\[
    s_v=\left\{\begin{array}{ll}
        1 & \text{if vertex }v\text{ belongs to community }1, \\
        0 & \text{if vertex }v\text{ belongs to community }2,
    \end{array}\right.
\]
We then write Newman's modularity in the form
\[
Q=\sum_{u, v}(F_{u,v}-J_{u, v}')s_us_v.
\]

Given a non-bipartite network, if we pick $C=\{\mathbf{S}=\mathbf{s} \mathbf{s}^{\top}, \mathbf{s}\in \{-1,1\}^{|V|\times 1}\}$, TVD-modularity becomes 
\[
\modu^{tvd}(\mathbf{G}):=\max_{\mathbf{s}\in \{-1,1\}^{|V|\times 1}} \mathbf{s}^{\top} (\mathbf{F}-\mathbf{J})\mathbf{s} 
\]
which is equivalent to the above $Q$. For the division of a network into more than two communities, we just need to relax the 1-rank constraint to a $r$-rank version ($r>1$).

\paragraph{Optimization for smooth functions.} One of the disadvantages of Newman's modularity is optimizing it requires us to solve in a discrete space, $\mathbf{s}\in \{-1,1\}^{|V|}$. Our $f$-modularity can avoid this obstacle by choosing smooth convex functions $f$ so that we can take advantage of the differentiability of $f$-modularity for optimization. Taking Pearson-modularity with the low-rank constraint as an example,
\begin{align*}
\modu^{Pearson}(\mathbf{G}) = -1 +\max_{\mathbf{D}\in C } \sum_{u,v}\left(2 D_{u,v} F_{u,v}- D_{u,v}^2 J_{u,v} \right)&\\
= -1 + \max_{\mathbf{D}\in C}\sum_{u,v}-J_{u,v}(D_{u,v}-\frac{F_{u,v}}{J_{u,v}})^2+J_{u,v}(\frac{F_{u,v}}{J_{u,v}})^2&.
\end{align*}
We can see quantifying modularity is induced to a weighted low rank approximation problem, on which there exists many mature algorithms \cite{srebro2003weighted}. 

Here we propose an approximation algorithm (Algorithm~\ref{algo}) based on an efficient low-rank approximation subroutine, which works for a family of smooth convex functions $f$ under low-rank constraints. Lemma \ref{lemma:dual} shows that if without any constraint, the optimal distinguisher $\mathbf{D}^*$ satisfies that for all $u,v$, $D^*_{u,v}=\frac{F_{u,v}}{J_{u,v}}$ (Line~\ref{optimalD}). Thus, if there is a low-rank constraint, we can optimize $f$-modularity by finding a distinguisher $\mathbf{D}_r$, a low-rank approximation of $\mathbf{D}^*$ (Line~\ref{approximateD}). This can be done well by singular value decomposition (SVD) \cite{golub1971singular}, or non-negative matrix factorization (NMF) \cite{lee1999learning} if non-negativity is required. Rank selection in Line~\ref{selectr} is determined by a threshold $\theta$ such that $r$ is the minimum value satisfying $$\frac{||\mathbf{D}^*-\mathbf{D}_k||_F^2}{||\mathbf{D}^*||_F^2}<\theta$$ where $||\cdot||_F$ is the Frobenius norm. Finally we use $\mathbf{D}_r$ to compute $f$-modularity (Line~\ref{getM}). 

\begin{algorithm}[htpb]
\caption{Approximation of $f$-Modularity}
\label{algo}
\textbf{Input}: Frequency Matrix $\mathbf{F}$\\
\textbf{Parameter}: Threshold $\theta$\\
\textbf{Output}: Modularity $\modu$
\begin{algorithmic}[1] 
\STATE $J_{u,v}=(\deg(u)\deg(v)-\frac{1}{N}F_{u,v})\frac{N}{N-1} $
\FOR{$u\in U, v\in V$}
\STATE\label{optimalD} $D^*_{u,v}=F_{u,v}/max(J_{u,v},\epsilon) \text{ // small } \epsilon \text{ to avoid overflow} $
\ENDFOR
\STATE\label{selectr} $r=Rank\_Selection\left(D^*\right)$
\STATE\label{approximateD} $\mathbf{D}_r=Approximation\left(\mathbf{D}^*,r\right)$
\STATE\label{getM} $\modu=\sum_{u,v}\left(\partial{f}(D_{u,v}) F_{u,v}- f^{\star} (\partial{f}(D_{u,v}))J_{u,v} \right)$
\STATE \textbf{return} $\modu$
\end{algorithmic}
\end{algorithm}

We emphasize that the main objective for Algorithm~\ref{algo} is to show $f$-modularity with a smooth function $f$ can be optimized in an easier way than the original modularity, rather than investigating the bound of approximation ratio or efficiency potentials on real-world data if actually implemented in practice.

\section{Numerical Experiments}
\label{experiments}
In this section, we validate the inherited information-theoretic properties of $f$-modularity on synthetic data. We generate bipartite multigraphs from known distributions and show that $f$-modularity 1) vanishes when there is no community structure, 2) decreases as communities are contracted, and 3) estimates $f$-mutual information well.
    
\subsection{Data Generation}

We use the stochastic block model~\cite{holland1983stochastic} as the generator. First, we divide two sets of vertices $(U, V)$ into $m$ communities, $\{(U_1, V_1), (U_2, V_2), \cdots, (U_m, V_m)\}$, each with $2n$ vertices and $|U_i|=|V_i|=n, \forall i\in \{1, 2, \cdots, m\}$. We generate $G=\left<U, V, E\right>$ with the probability of edge $(u, v)$ to be
\[
    p(u,v)\propto
    \begin{cases}
        1 & $u,v$\text{ are in the same community},\\
        \alpha & \text{otherwise},
    \end{cases}
\]
where $\alpha \in [0,1]$ is a hyper-parameter. See Figure~\ref{fig:gen_example} for an example. The sum of probability over all edges is normalized to~$1$ so that the probability matrix $\mathbf{P}$ can be regarded as a joint probability distribution between two random variables $X$ and $Y$ with $\Pr[X=u,Y=v] = p(u, v)$.

\begin{figure}[htpb]
    \centering
    \begin{subfigure}{0.4\columnwidth}
        \includegraphics[width=\columnwidth]{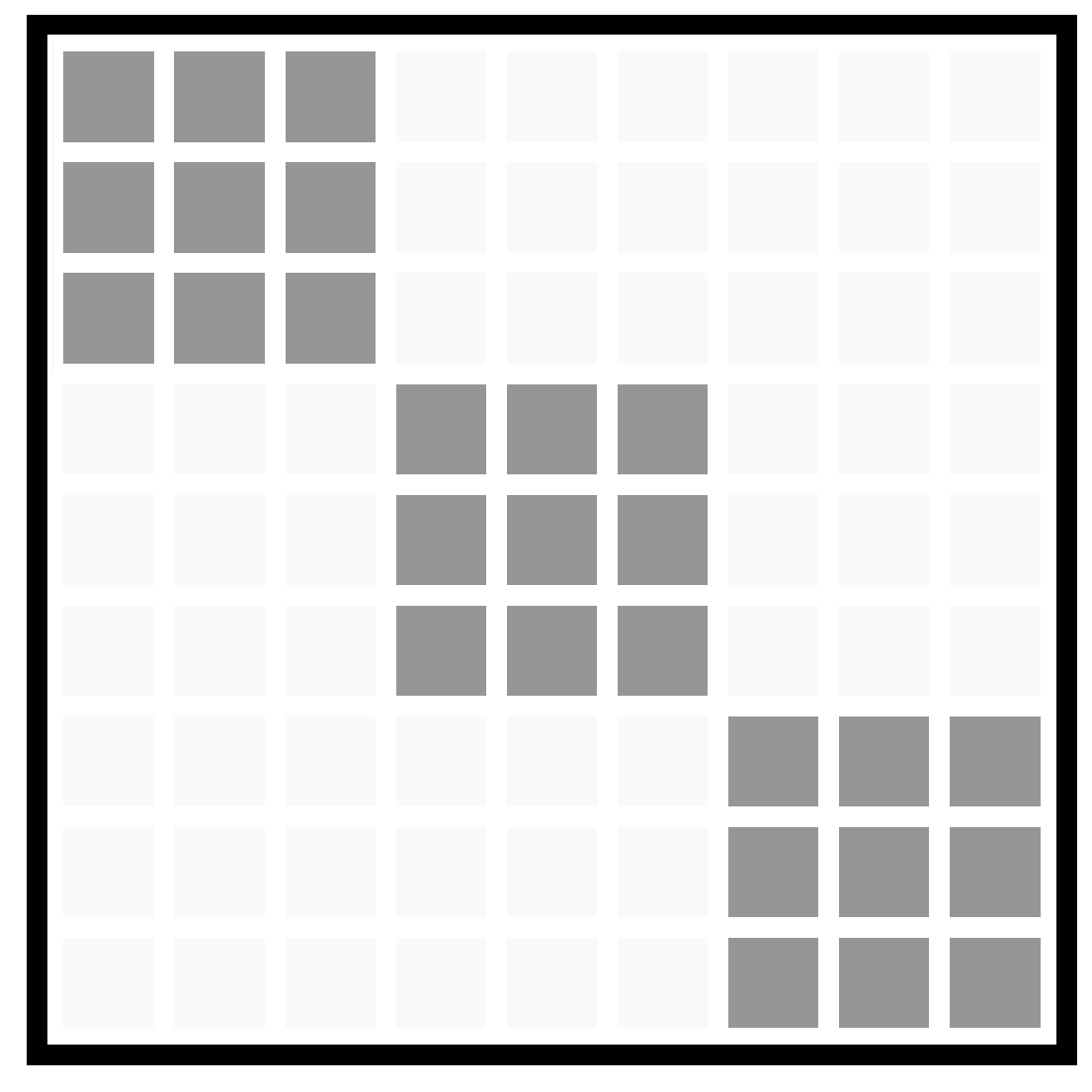}
    \end{subfigure}
    \begin{subfigure}{0.4\columnwidth}
        \includegraphics[width=\columnwidth]{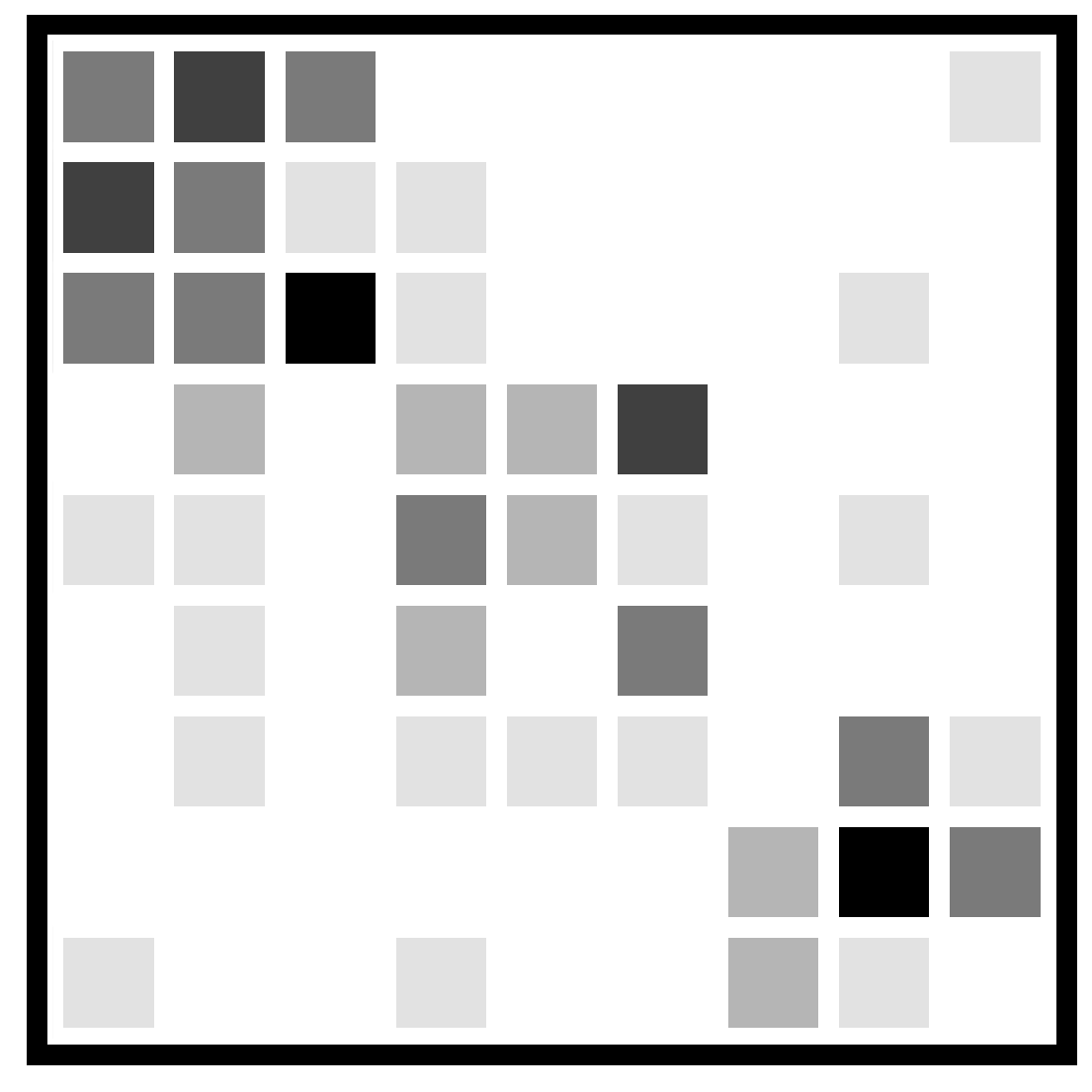}
    \end{subfigure}
    \caption{\textbf{Stochastic block model.} We set the number of communities $m=3$, the size of each community $n=3$ and $\alpha=0.1$. The left matrix represents the underlying probability distribution and the right one is the frequency matrix of a realization. }
    \label{fig:gen_example}
\end{figure}

\begin{figure}[htpb]
    \centering
    \begin{subfigure}{0.4\columnwidth}
        \includegraphics[width=1\textwidth]{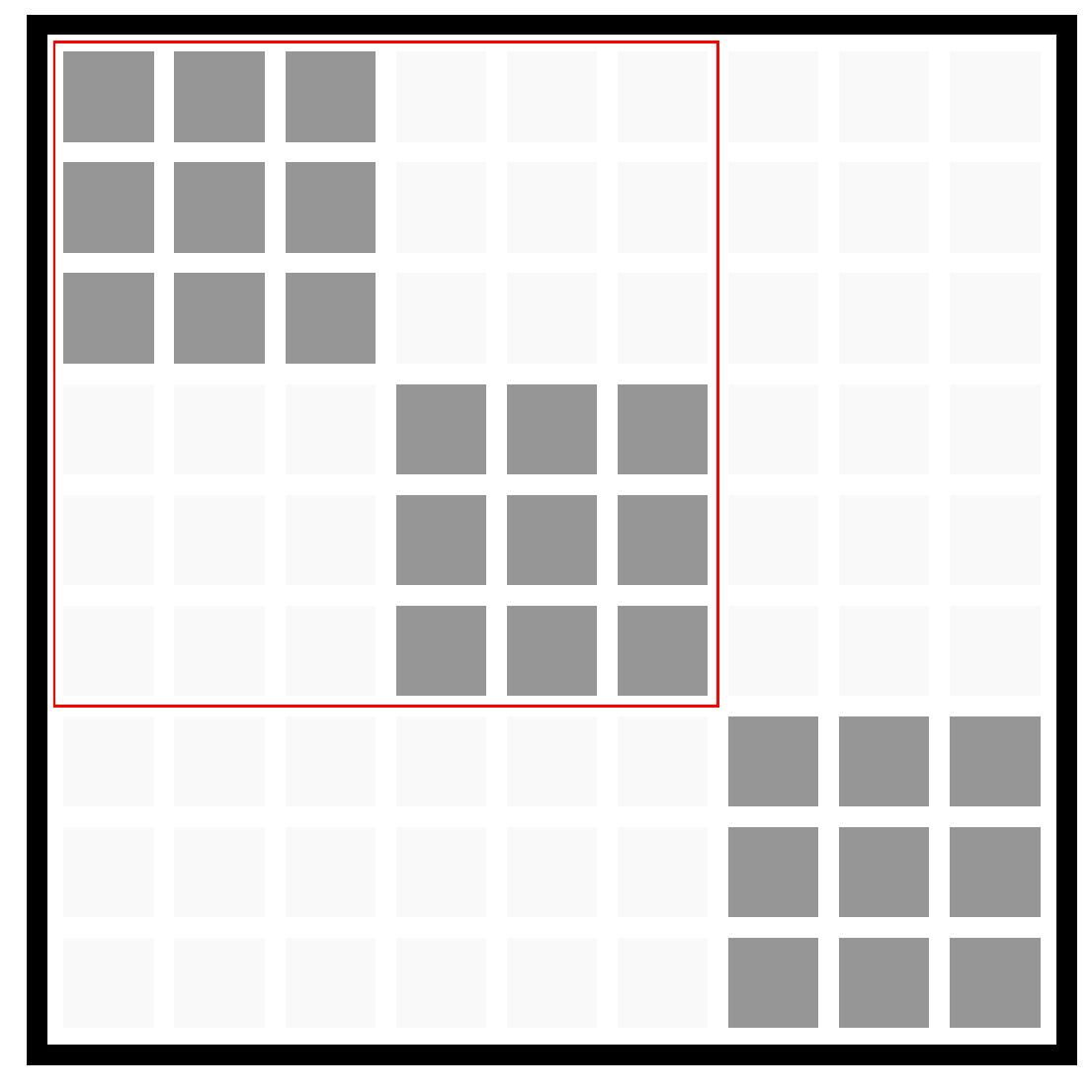}
    \end{subfigure}
    \begin{subfigure}{0.4\columnwidth}
        \includegraphics[width=1\textwidth]{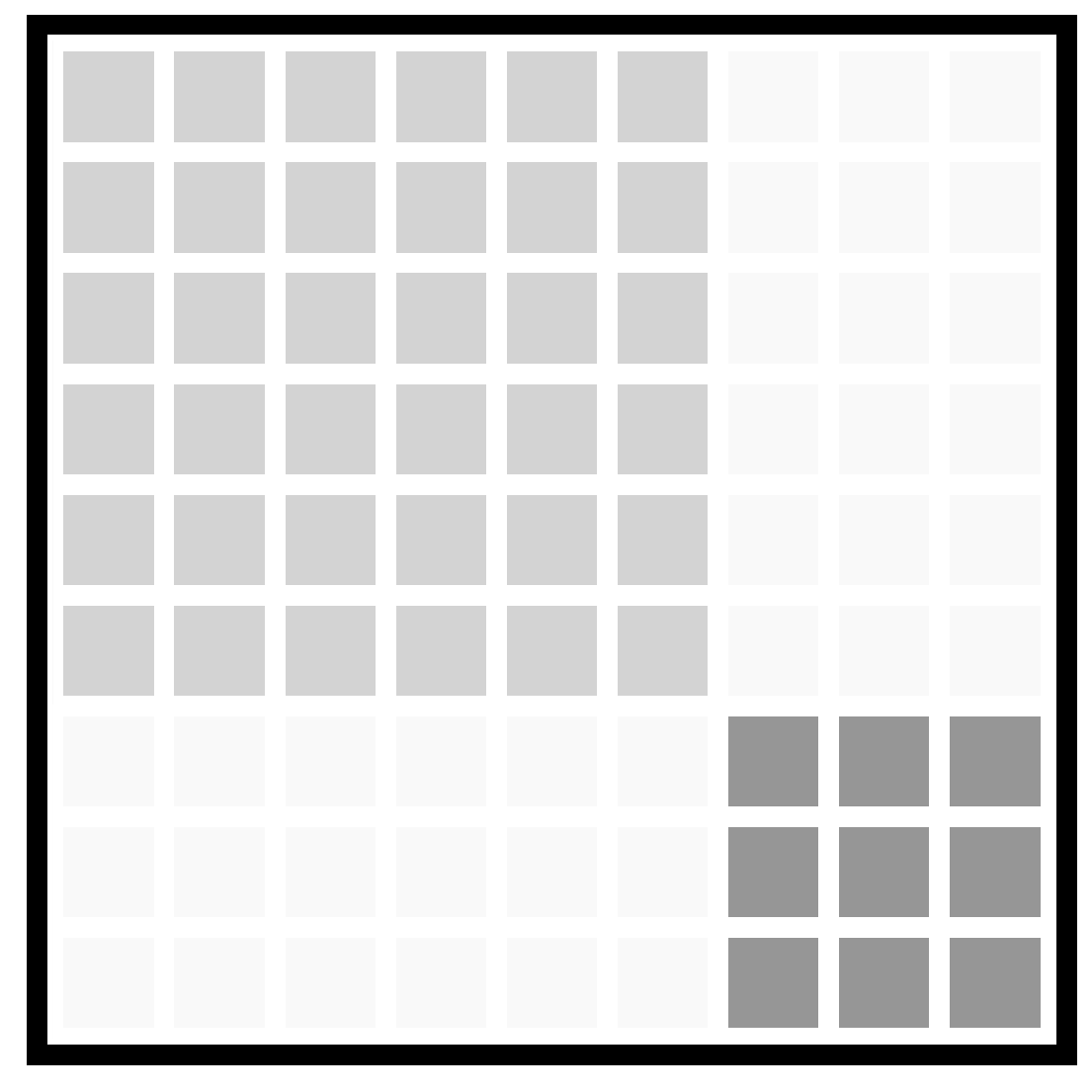}
    \end{subfigure}
    \caption{\textbf{Community contraction process.} We contract the first two communities in the left network to obtain the right one.}
    \label{fig:block_example}
\end{figure}

\begin{figure*}[htpb]
    \subcaptionbox{$\alpha$=0.1}{
            \begin{minipage}{\columnwidth}
                \includegraphics[width=\textwidth]{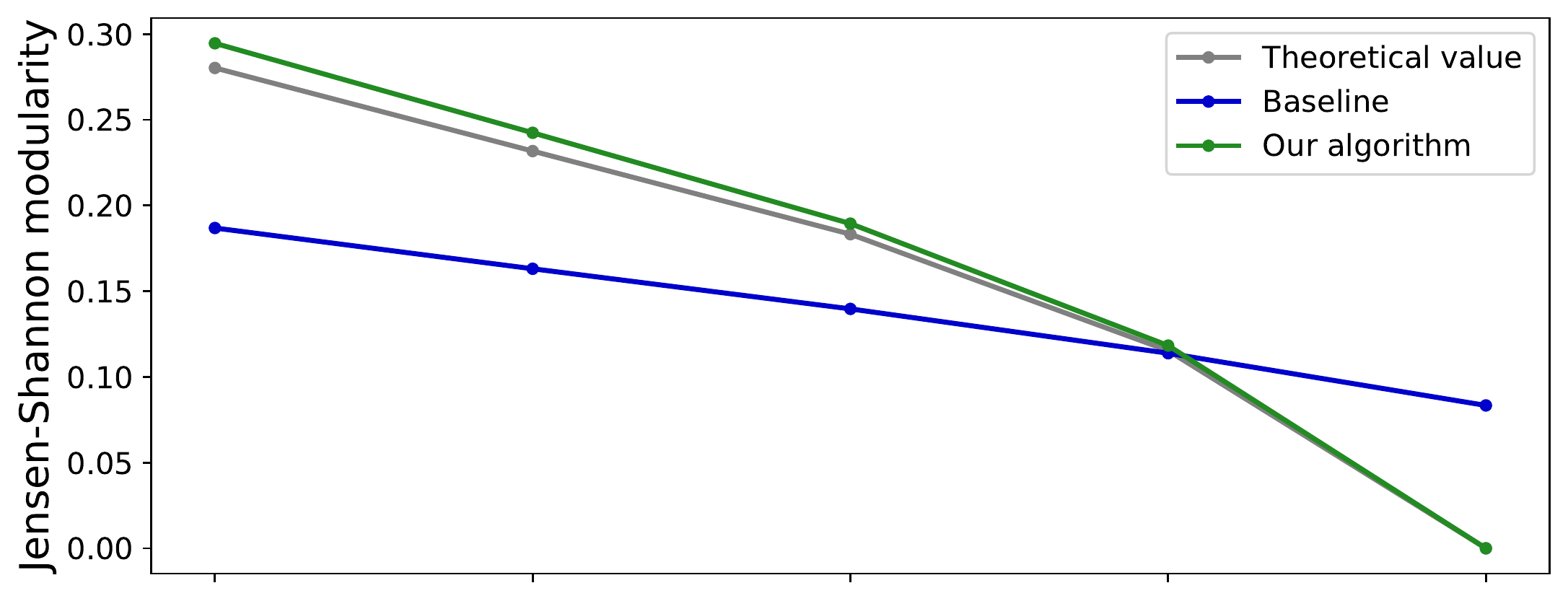}\\
                \null\hspace{0.045\textwidth}
                \includegraphics[width=0.18\textwidth]{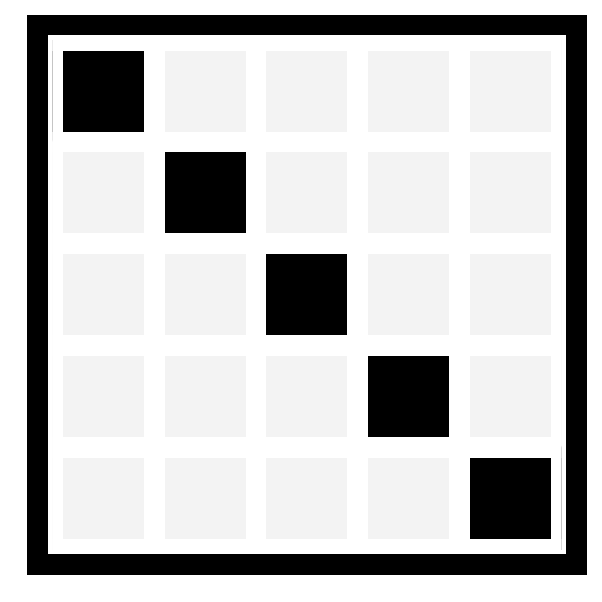}
                \includegraphics[width=0.18\textwidth]{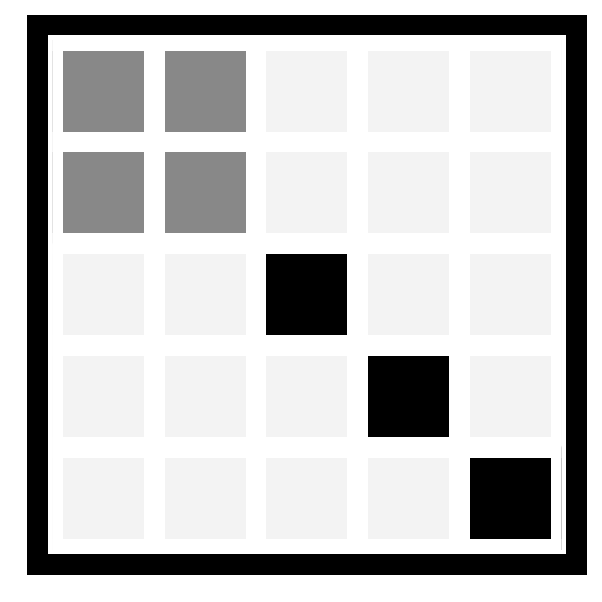}
                \includegraphics[width=0.18\textwidth]{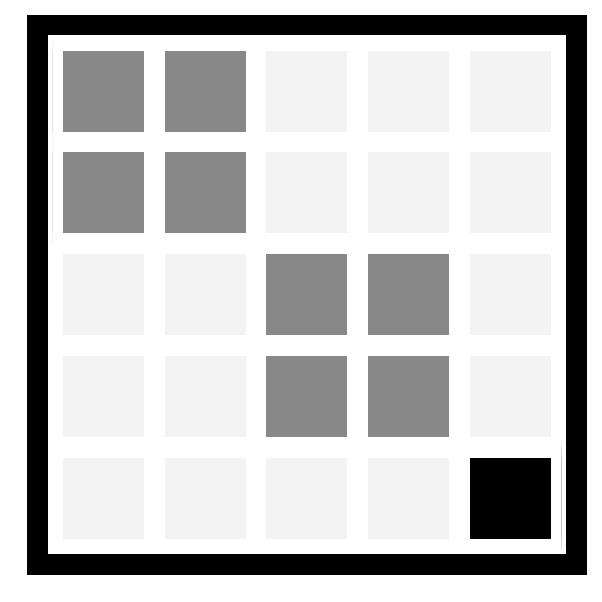}
                \includegraphics[width=0.18\textwidth]{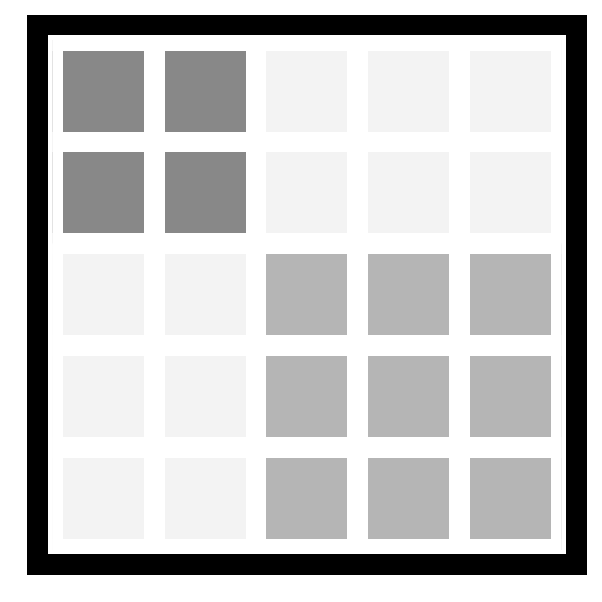}
                \includegraphics[width=0.18\textwidth]{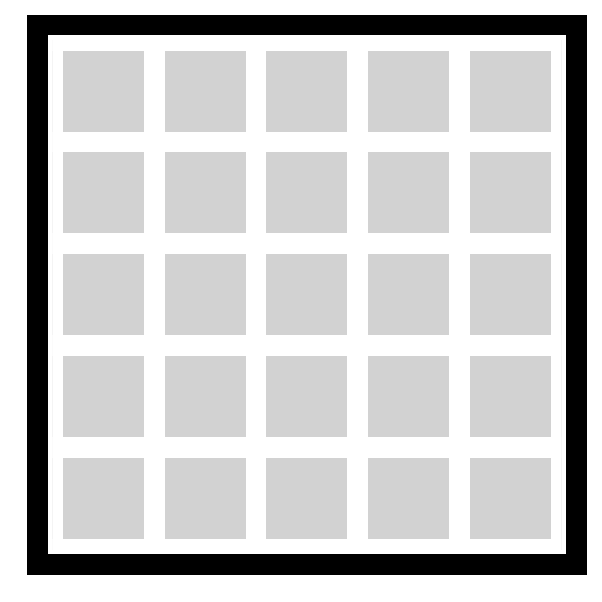}
            \end{minipage}
    }
    \subcaptionbox{$\alpha$=0.2}{
            \begin{minipage}{\columnwidth}
                \includegraphics[width=\textwidth]{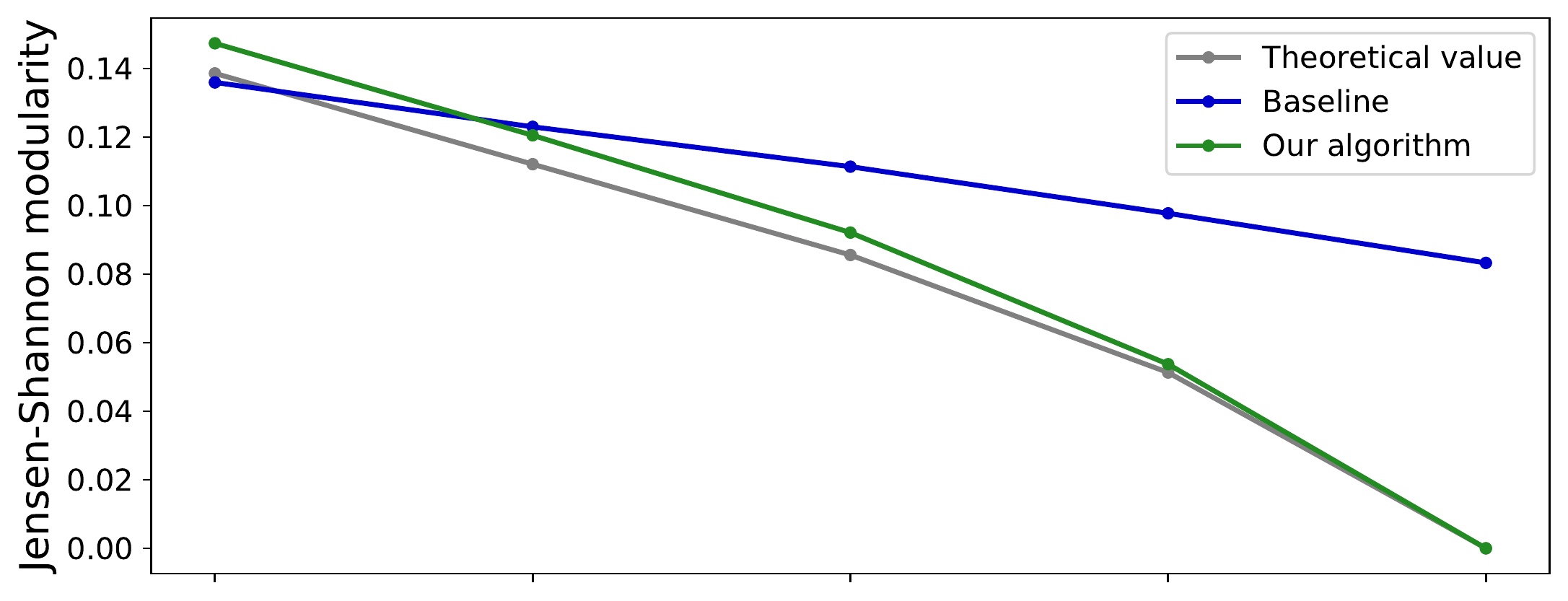}\\
                \null\hspace{0.045\textwidth}
                \includegraphics[width=0.18\textwidth]{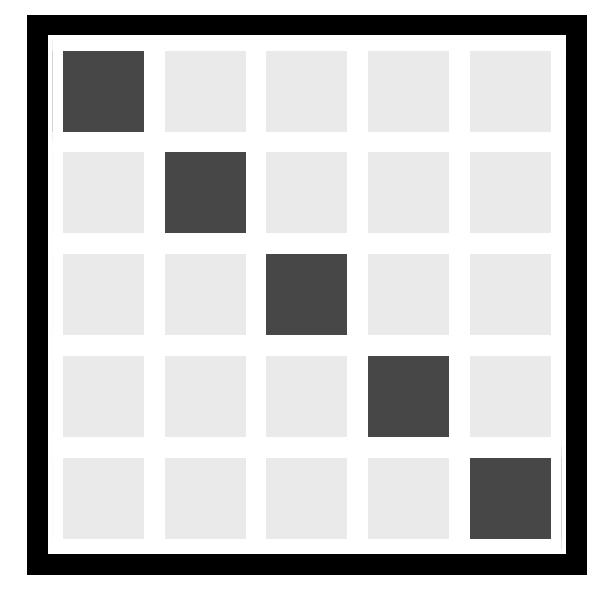}
                \includegraphics[width=0.18\textwidth]{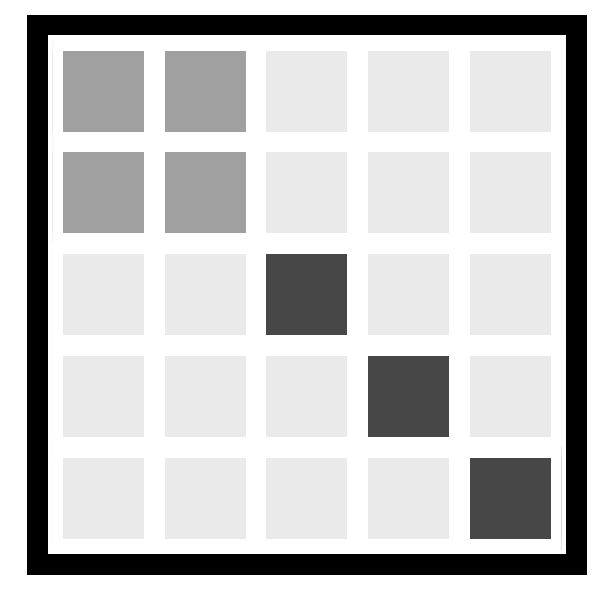}
                \includegraphics[width=0.18\textwidth]{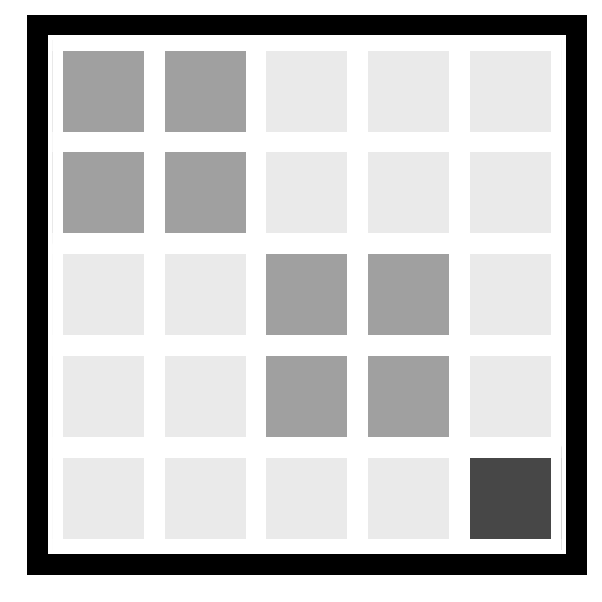}
                \includegraphics[width=0.18\textwidth]{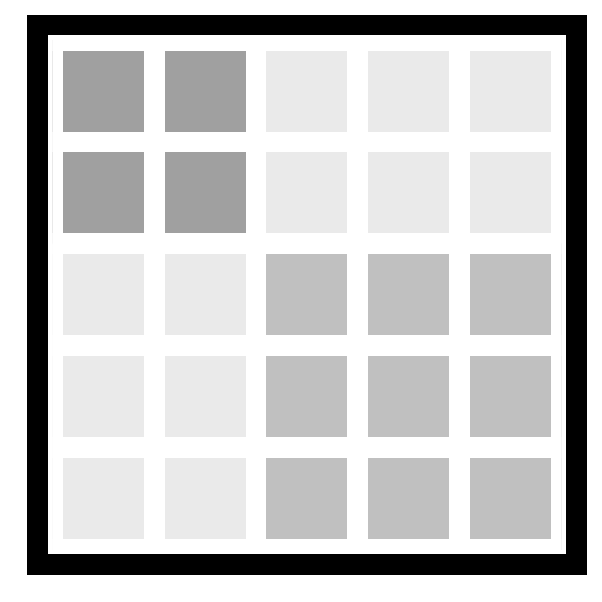}
                \includegraphics[width=0.18\textwidth]{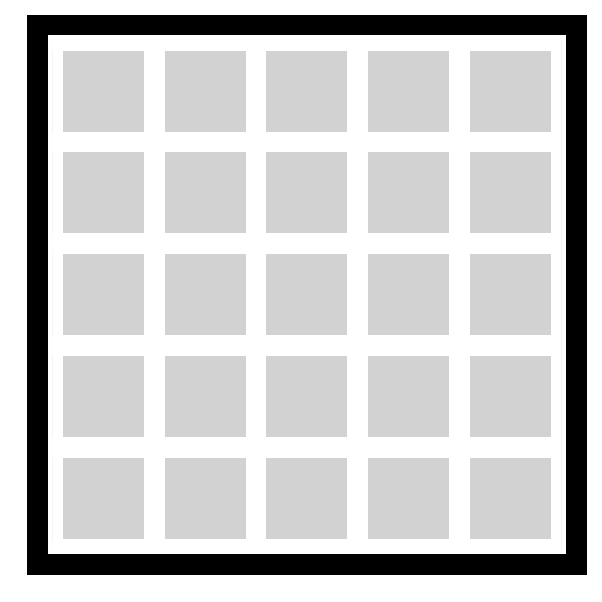}
            \end{minipage}
    }
    \caption{\textbf{Jensen-Shannon Modularity under Community Contraction.} We consider the bipartite graphs of size $200 \times 200$ with 40000 edges. The initial distribution has 5 identical communities and we contract the communities in the sequence of $(1)(2)(3)(4)(5)\rightarrow(12)(3)(4)(5)\rightarrow(12)(34)(5)\rightarrow(12)(345)\rightarrow(12345)$. The heat maps below illustrate the corresponding distribution matrices. The gray line is the theoretical value of Jensen-Shannon mutual information taking the true distribution as the input. The blue line shows the results of the baseline estimator, which calculates the mutual information directly by the noisy frequency matrix. The green line is estimated by our algorithm with $\theta = 0.9$ in the rank selection. As Jensen-Shannon divergence requires the non-negativity of $D_{u,v}$, we use NMF in the step of low-rank approximation. For each contraction stage, we independently generate 100 random graphs based on the underlying distribution and calculate the mean values. }
    \label{fig:jsmodularity-short}
\end{figure*}

\subsection{Community Contraction}

$f$-Modularity inherits the three properties of $f$-mutual information in Lemma \ref{properties}, symmetry, non-negativity and information monotonicity. The first two are obviously guaranteed by the definition of $f$-modularity, i.e.
\begin{itemize}
    \item $\modu^f(\left<U, V, E\right>) = \modu^f(\left<V, U, E\right>)$,
    \item $\modu^f(\left<U, V, E\right>) \geq 0$.
\end{itemize}
However, the last one, non-negativity, may not look straightforward.

Here we use community contraction to verity that $f$-modularity is approximately monotone regarding the level of community structure (see Figure~\ref{fig:block_example}). The information monotonicity of $f$-mutual information states that $\mi^f(T(X);Y)\leq \mi^f(X;Y)$ where $T(\cdot)\in \mathbb{R}^{|\Sigma_X|\times |\Sigma_X|}$ is a possibly random operator on $X$ whose randomness is independent of $Y$. Remark that any operator $T$ essentially multiplies a transition matrix to the joint distribution matrix. From the perspective of network, if we reduce the level of community structure by multiplying a transition matrix to the graph distribution, $f$-modularity should decrease. We choose community contraction as this operator for ease of presentation.

In detail, we contract two communities $(U_1,V_1),(U_2,V_2)$ by allocating the probability evenly within the merged community, i.e., for all $u\in U_1\cup U_2,v\in V_1\cup V_2$, we set the new probability of the edge $(u,v)$ to be 
\[
p'(u,v)=\frac{\sum_{s\in U_1\cup U_2,t\in V_1\cup V_2}p(s,t)}{|U_1\cup U_2||V_1\cup V_2|}.
\]

\subsection{Results}

Our model assumes the real network is a realization of the underlying distribution with noise. Detecting communities is essentially a process of eliminating noise and recover the true distribution. For example in Figure~\ref{fig:gen_example}, we aim to ``see'' the left matrix from the right matrix. So in addition to information monotonicity, we will also show the robustness of $f$-modularity for estimating the $f$-mutual information of the true joint distribution. While the \textit{theoretical value} of mutual information use the true distribution matrix, we set our \textit{baseline} estimator such that it directly calculates the mutual information by the noisy frequency matrix.

Figure~\ref{fig:jsmodularity-short} shows our numerical experiments on Jensen Shannon-modularity. We can see
\begin{itemize}
\item \textbf{Non-negativity}: Jensen-Shannon modularity is non-negative and approximately vanishes when the graph has no community structure;
\item \textbf{Information Monotonicity}: when we contract communities,  Jensen-Shannon modularity decreases;
\item \textbf{Robustness}: Compared to the baseline, Jensen-Shannon modularity provides a much more robust estimation for the theoretical value of Jensen-Shannon mutual information. 
\end{itemize}

More results for other instances of $f$-modularity are deferred to the appendix due to the limit of space. The above observations are also fit for them.



\section{Conclusion}
\label{conclusion}
In this paper, we propose a generalized modularity, $f$-modularity, based on the dual form of $f$-mutual information. We find a special case of TVD-modularity exactly matches Newman's modularity. We also give an algorithm that estimates $f$-modularity under the case of smooth functions $f$ and low-rank constraints $C$. Finally, we validate the properties of $f$-modularity by numerical experiments. Our work not only develops new measures for community structure, but also provides an information-theoretical interpretation to the concept of modularity. Modularity and mutual information, though lying in different areas, are two sides of the same coin.

So far we mainly focused on the low-rank constraint in this work for its simplicity. Our future work will explore not only better rank selection algorithms but also different constraints. Another interesting direction is to study networks with different topologies, like nested networks. We would like to employ information-theoretical tools to quantify more features in networks other than modularity.


    

\clearpage

\bibliographystyle{named}
\bibliography{reference.bib}


\clearpage

\onecolumn
\appendix

\section{Experimental Results on Jensen-Shannon Modularity}

\begin{figure}[H]
    \subcaptionbox{$\alpha$=0.1}{
        \begin{minipage}{0.5\textwidth}
            \includegraphics[width=\textwidth]{figures/charts_JS0.1.pdf}\\
                \null\hspace{0.045\textwidth}
                \includegraphics[width=0.18\textwidth]{mat/mat_1_0.pdf}
                \includegraphics[width=0.18\textwidth]{mat/mat_1_1.pdf}
                \includegraphics[width=0.18\textwidth]{mat/mat_1_2.pdf}
                \includegraphics[width=0.18\textwidth]{mat/mat_1_3.pdf}
                \includegraphics[width=0.18\textwidth]{mat/mat_1_4.pdf}
        \end{minipage}
    }
    \subcaptionbox{$\alpha$=0.2}{
        \begin{minipage}{0.5\textwidth}
            \includegraphics[width=\textwidth]{figures/charts_JS0.2.pdf}\\
                \null\hspace{0.045\textwidth}
                \includegraphics[width=0.18\textwidth]{mat/mat_2_0.pdf}
                \includegraphics[width=0.18\textwidth]{mat/mat_2_1.pdf}
                \includegraphics[width=0.18\textwidth]{mat/mat_2_2.pdf}
                \includegraphics[width=0.18\textwidth]{mat/mat_2_3.pdf}
                \includegraphics[width=0.18\textwidth]{mat/mat_2_4.pdf}
        \end{minipage}
    }
    \subcaptionbox{$\alpha$=0.3}{
        \begin{minipage}{0.5\textwidth}
            \includegraphics[width=\textwidth]{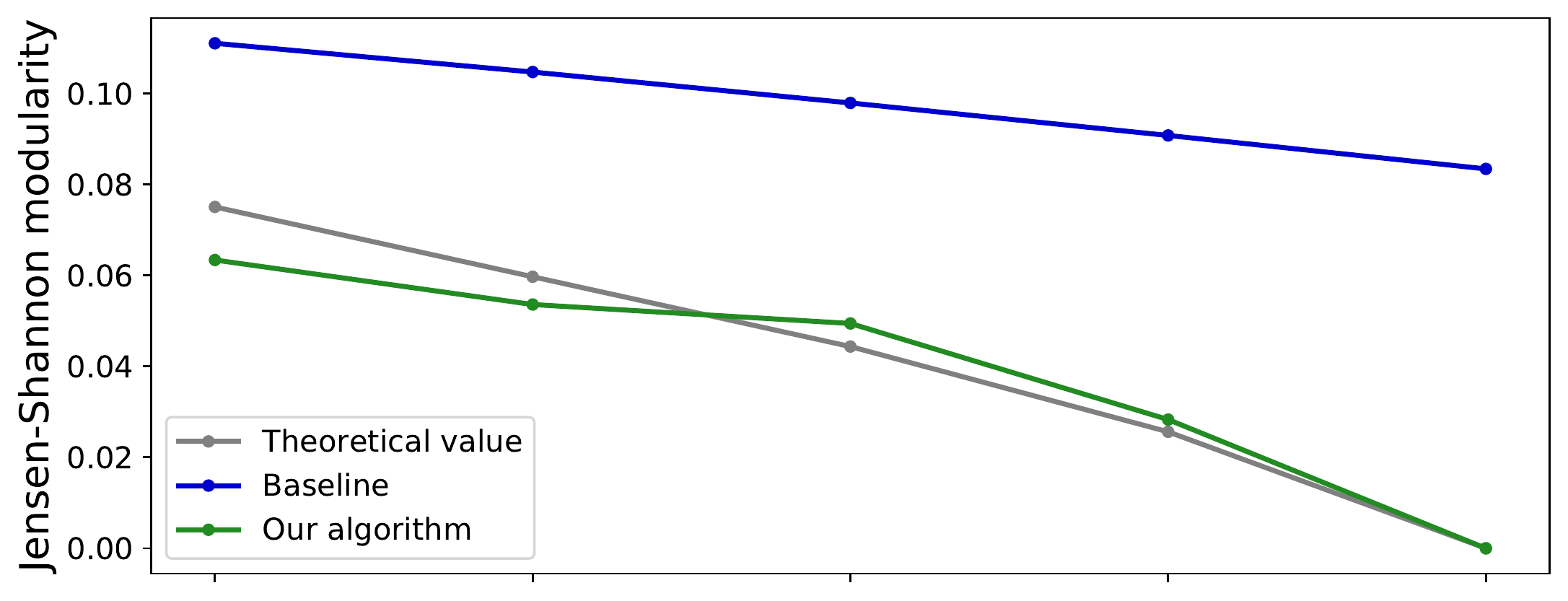}\\
                \null\hspace{0.045\textwidth}
                \includegraphics[width=0.18\textwidth]{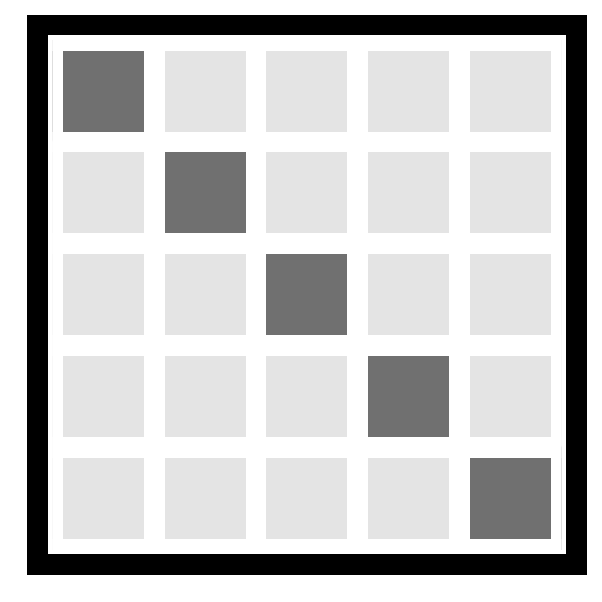}
                \includegraphics[width=0.18\textwidth]{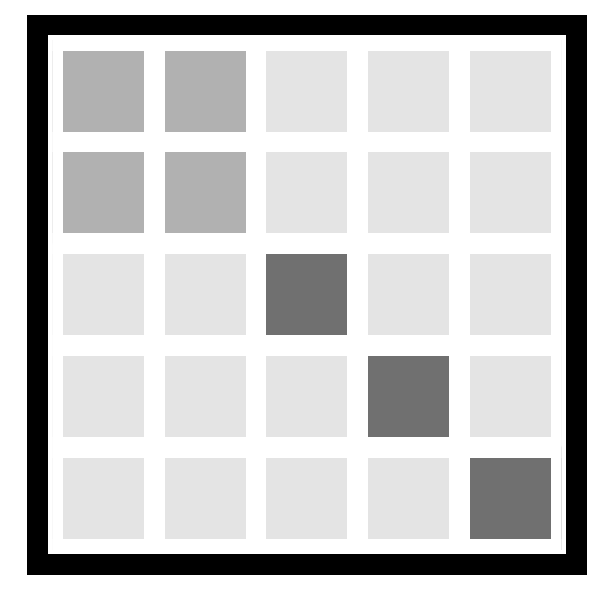}
                \includegraphics[width=0.18\textwidth]{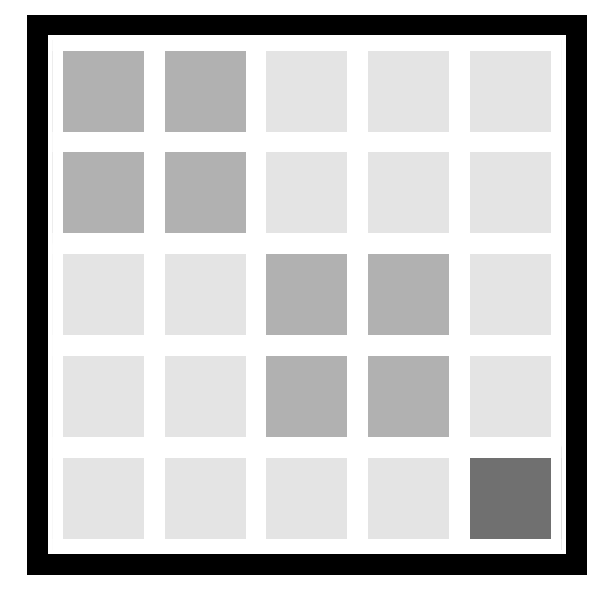}
                \includegraphics[width=0.18\textwidth]{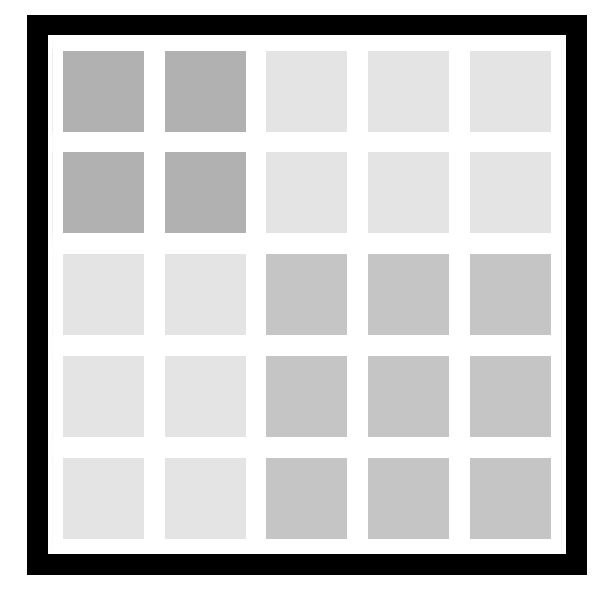}
                \includegraphics[width=0.18\textwidth]{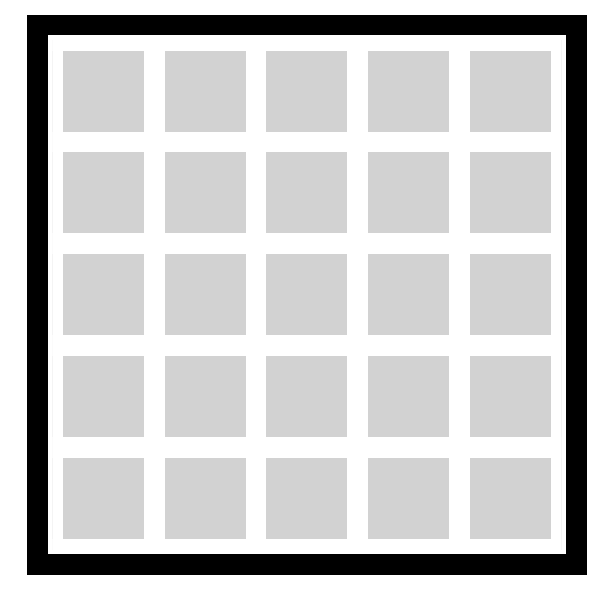}
        \end{minipage}
    }
    \subcaptionbox{$\alpha$=0.4}{
        \begin{minipage}{0.5\textwidth}
            \includegraphics[width=\textwidth]{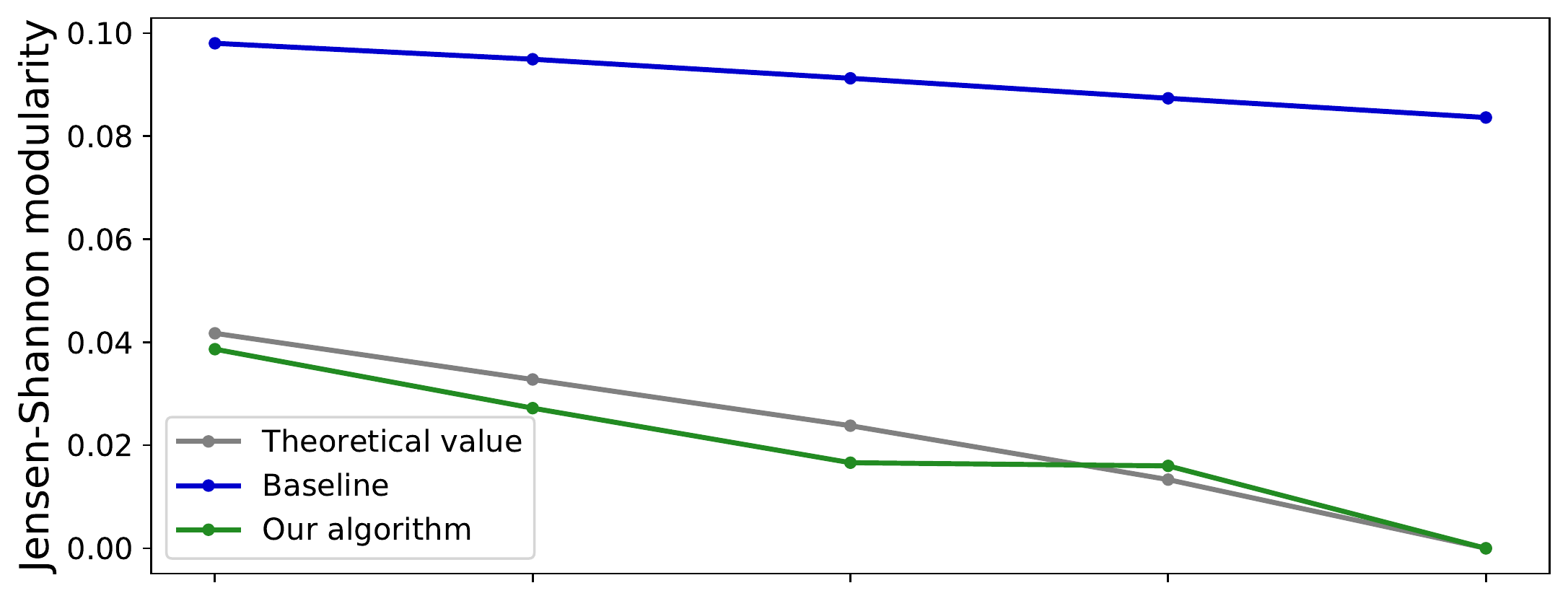}\\
                \null\hspace{0.045\textwidth}
                \includegraphics[width=0.18\textwidth]{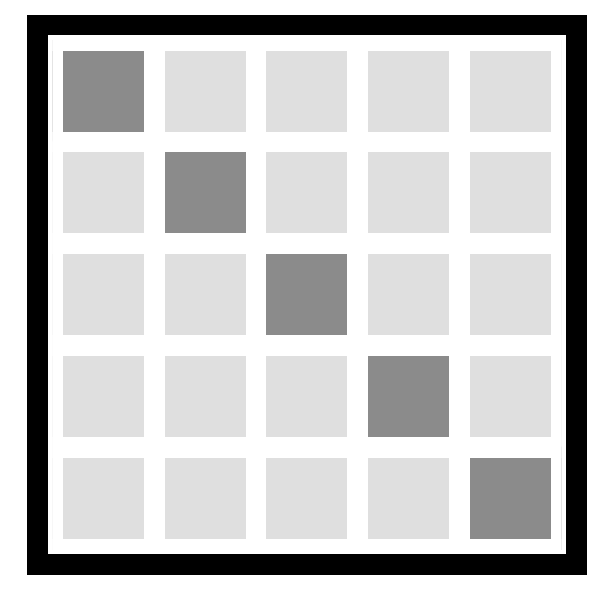}
                \includegraphics[width=0.18\textwidth]{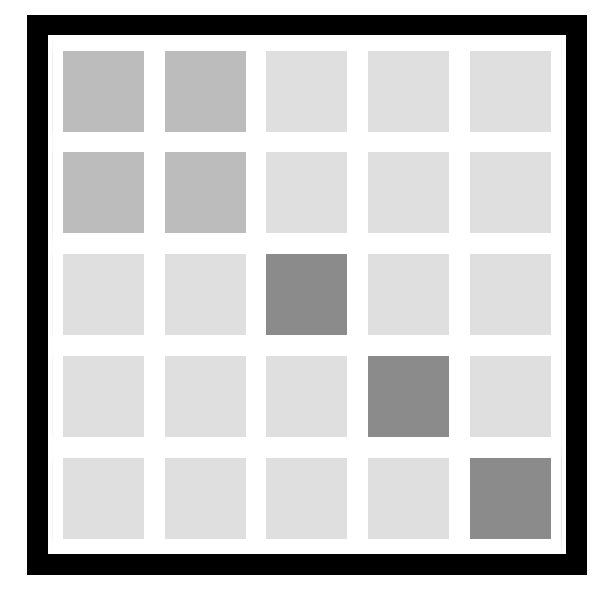}
                \includegraphics[width=0.18\textwidth]{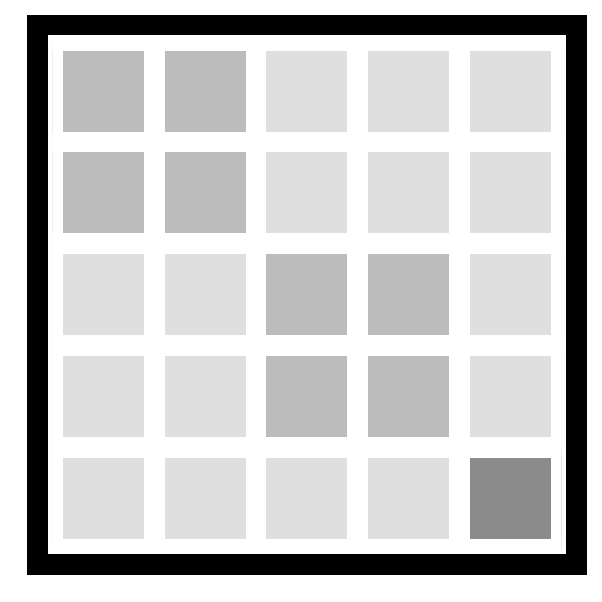}
                \includegraphics[width=0.18\textwidth]{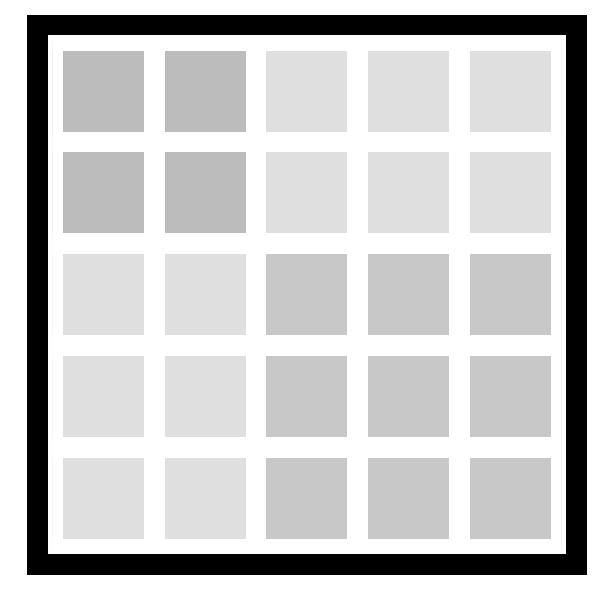}
                \includegraphics[width=0.18\textwidth]{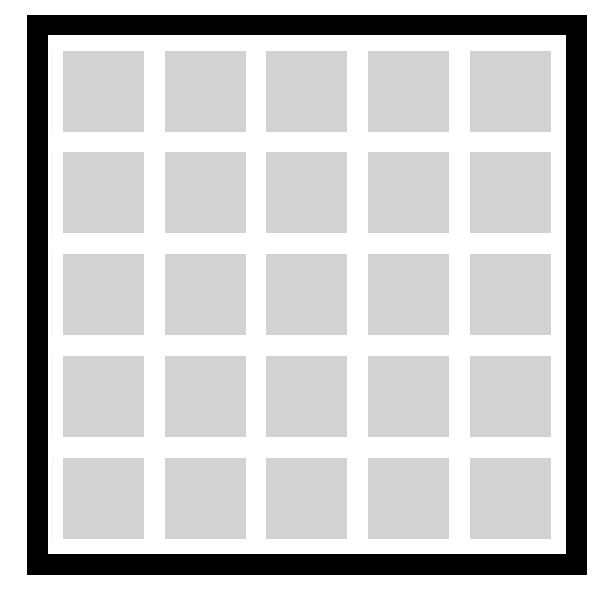}
        \end{minipage}
    }
    \subcaptionbox{$\alpha$=0.5}{
        \begin{minipage}{0.5\textwidth}
            \includegraphics[width=\textwidth]{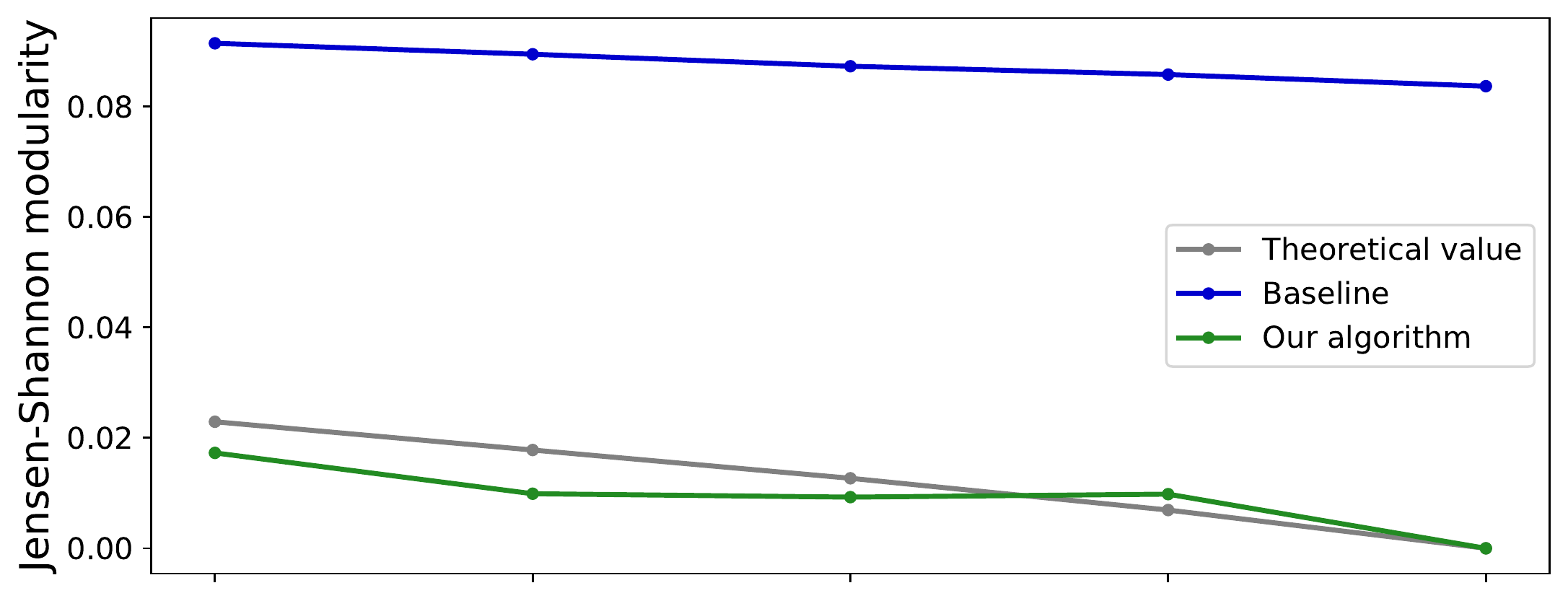}\\
                \null\hspace{0.045\textwidth}
                \includegraphics[width=0.18\textwidth]{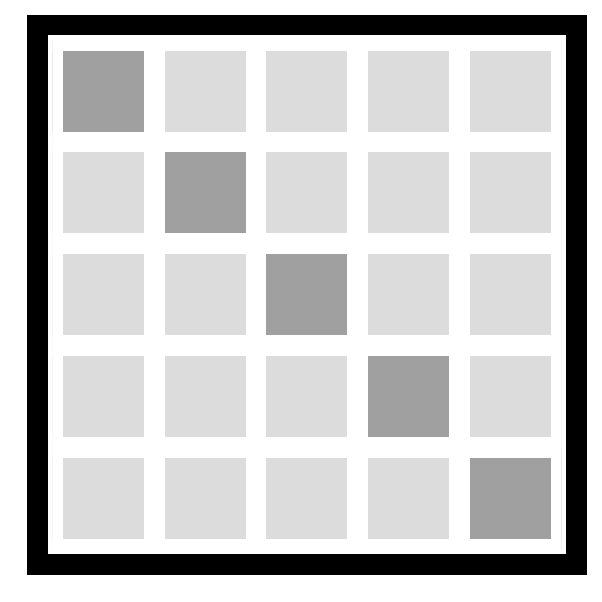}
                \includegraphics[width=0.18\textwidth]{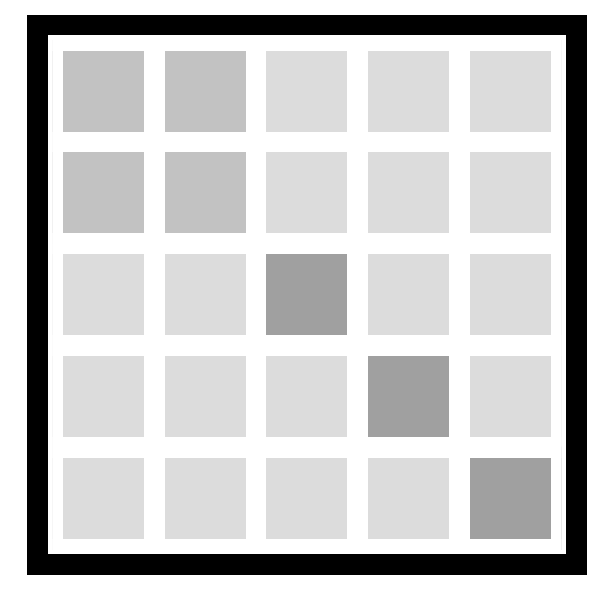}
                \includegraphics[width=0.18\textwidth]{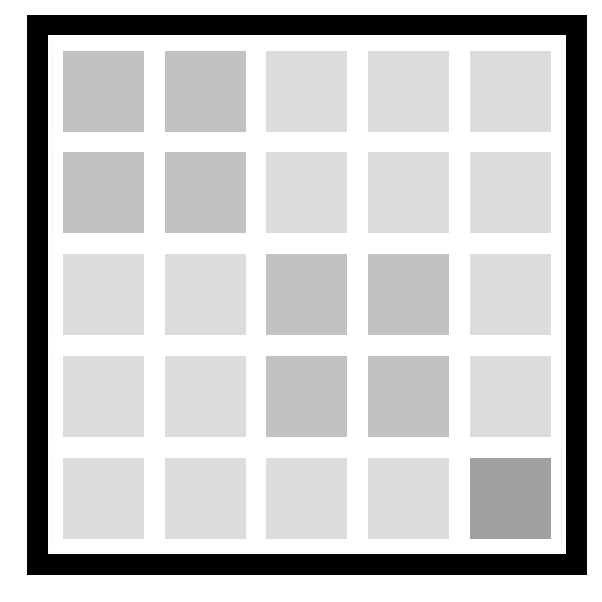}
                \includegraphics[width=0.18\textwidth]{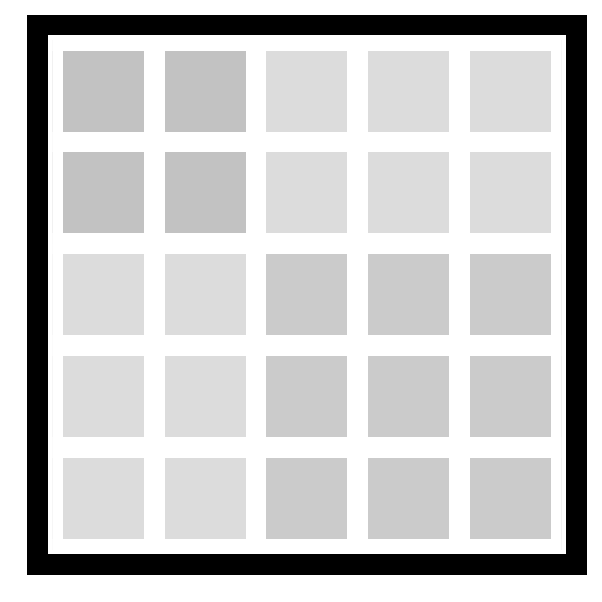}
                \includegraphics[width=0.18\textwidth]{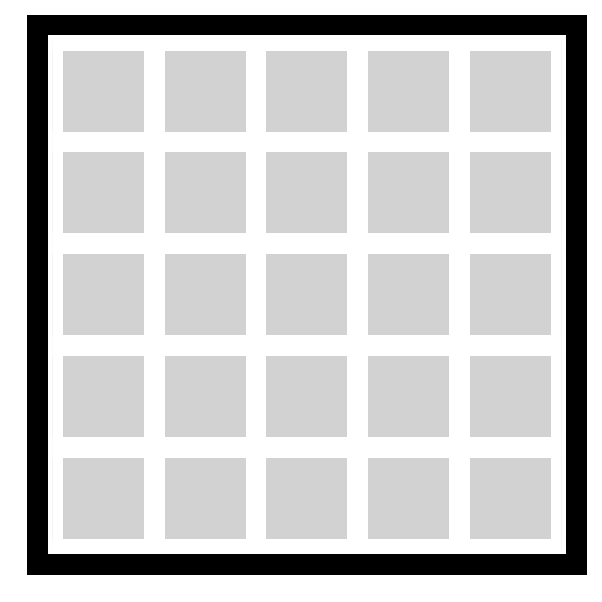}
        \end{minipage}
    }
    \subcaptionbox{$\alpha$=0.6}{
        \begin{minipage}{0.5\textwidth}
            \includegraphics[width=\textwidth]{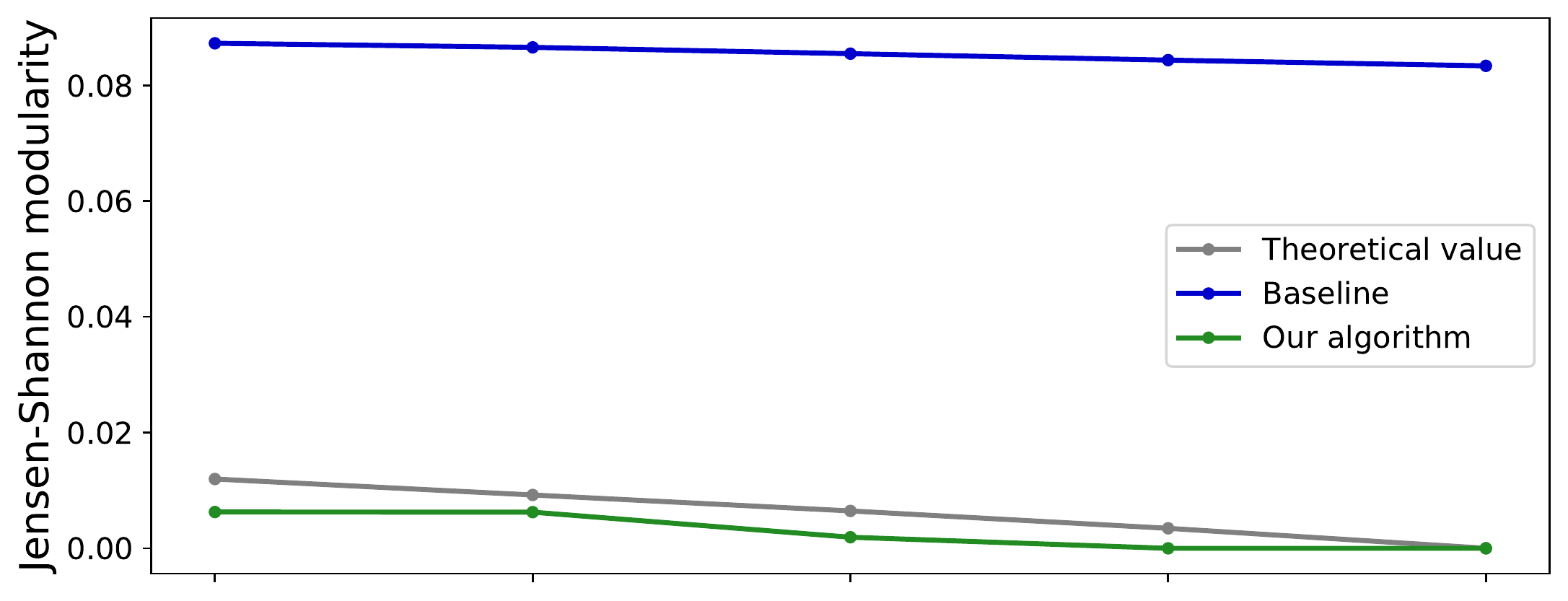}\\
                \null\hspace{0.045\textwidth}
                \includegraphics[width=0.18\textwidth]{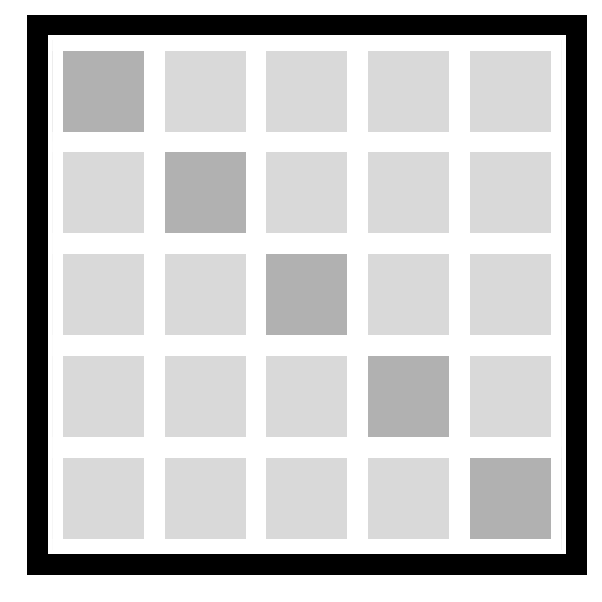}
                \includegraphics[width=0.18\textwidth]{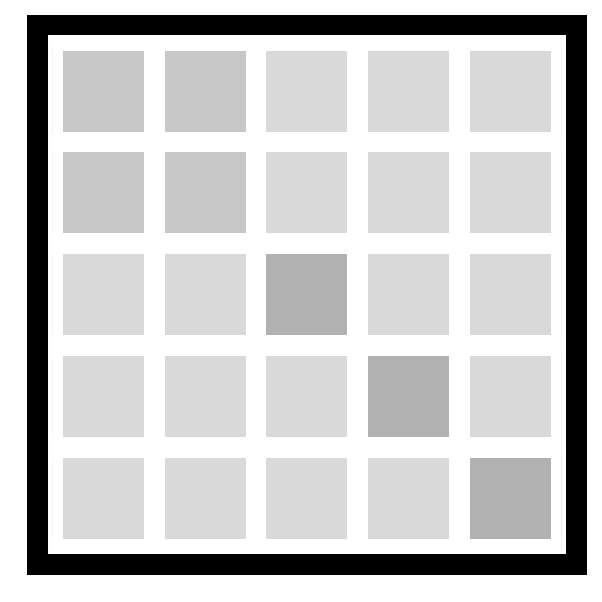}
                \includegraphics[width=0.18\textwidth]{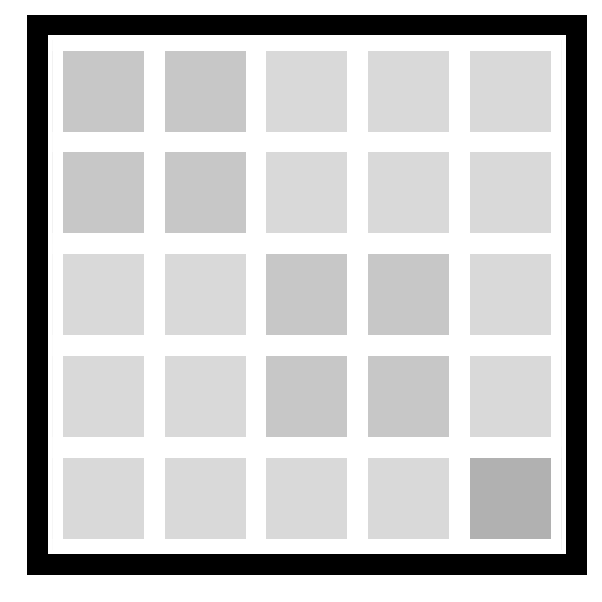}
                \includegraphics[width=0.18\textwidth]{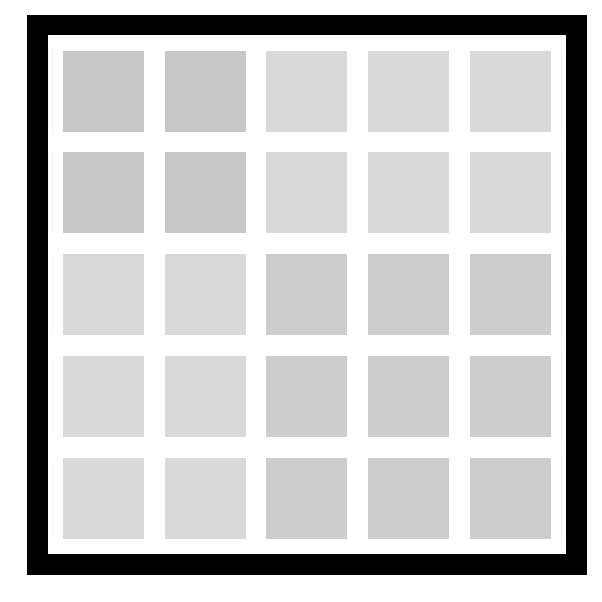}
                \includegraphics[width=0.18\textwidth]{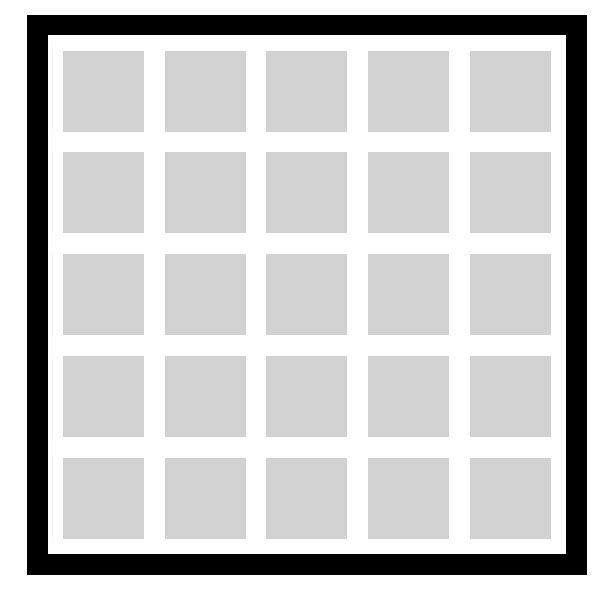}
        \end{minipage}
    }
    \caption{\textbf{Jensen-Shannon Modularity under Community Contraction.} The settings are the same as in Figure~\ref{fig:jsmodularity-short}.}
    \label{fig:jsmodularity}
\end{figure}

\newpage

\section{Experimental Results on KL-Modularity}

\begin{figure}[h!]
    \subcaptionbox{$\alpha$=0.1}{
        \begin{minipage}{0.5\textwidth}
            \includegraphics[width=\textwidth]{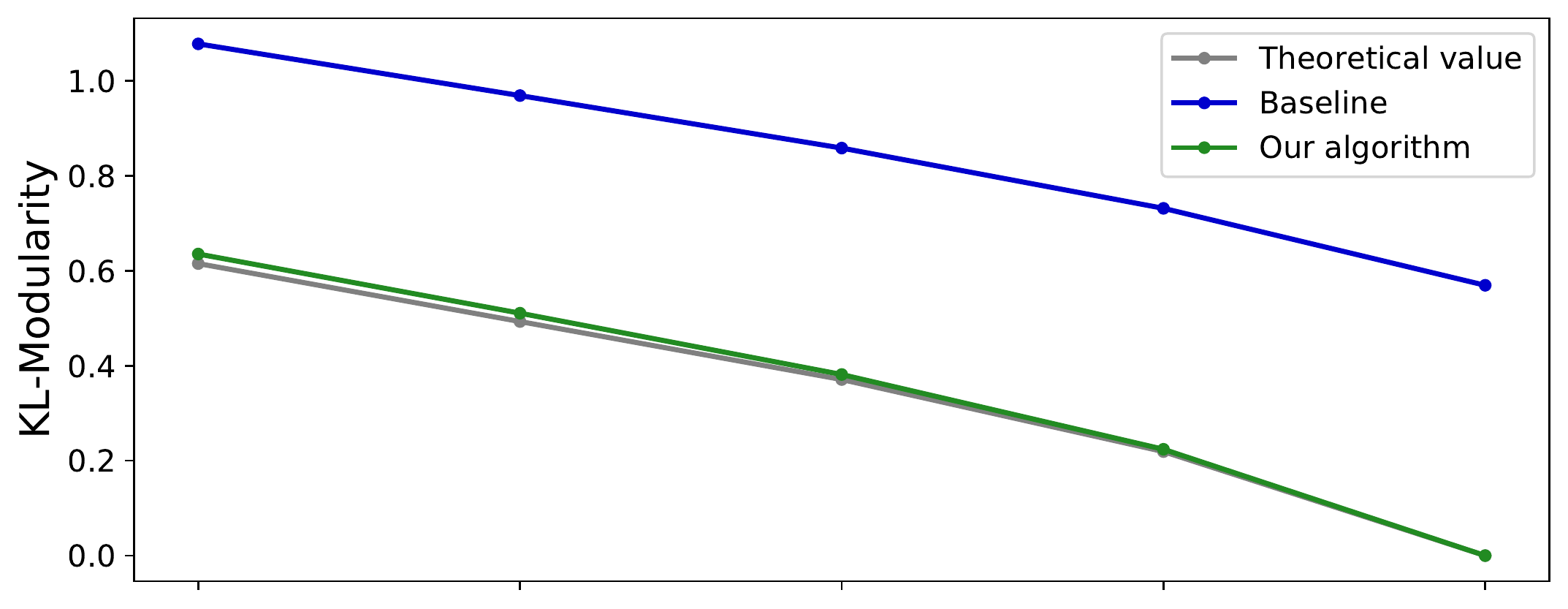}\\
                \null\hspace{0.045\textwidth}
                \includegraphics[width=0.18\textwidth]{mat/mat_1_0.pdf}
                \includegraphics[width=0.18\textwidth]{mat/mat_1_1.pdf}
                \includegraphics[width=0.18\textwidth]{mat/mat_1_2.pdf}
                \includegraphics[width=0.18\textwidth]{mat/mat_1_3.pdf}
                \includegraphics[width=0.18\textwidth]{mat/mat_1_4.pdf}
        \end{minipage}
    }
    \subcaptionbox{$\alpha$=0.2}{
        \begin{minipage}{0.5\textwidth}
            \includegraphics[width=\textwidth]{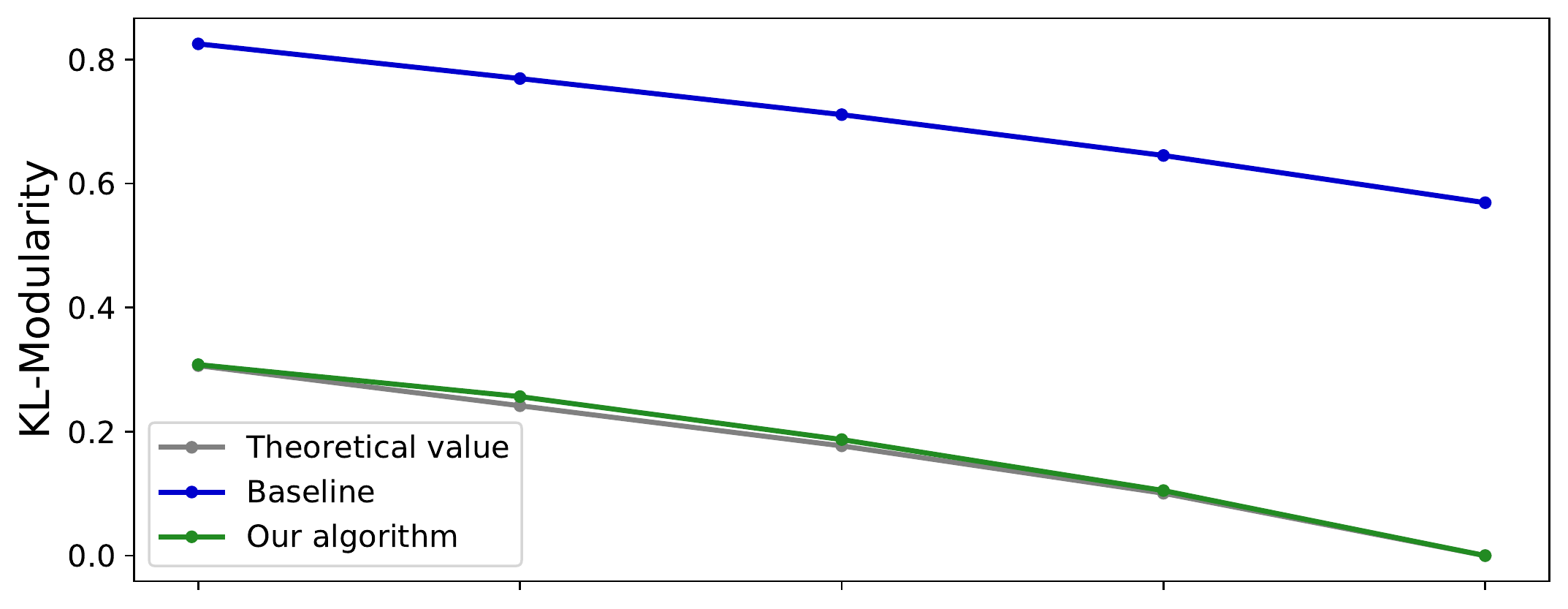}\\
                \null\hspace{0.045\textwidth}
                \includegraphics[width=0.18\textwidth]{mat/mat_2_0.pdf}
                \includegraphics[width=0.18\textwidth]{mat/mat_2_1.pdf}
                \includegraphics[width=0.18\textwidth]{mat/mat_2_2.pdf}
                \includegraphics[width=0.18\textwidth]{mat/mat_2_3.pdf}
                \includegraphics[width=0.18\textwidth]{mat/mat_2_4.pdf}
        \end{minipage}
    }
    \subcaptionbox{$\alpha$=0.3}{
        \begin{minipage}{0.5\textwidth}
            \includegraphics[width=\textwidth]{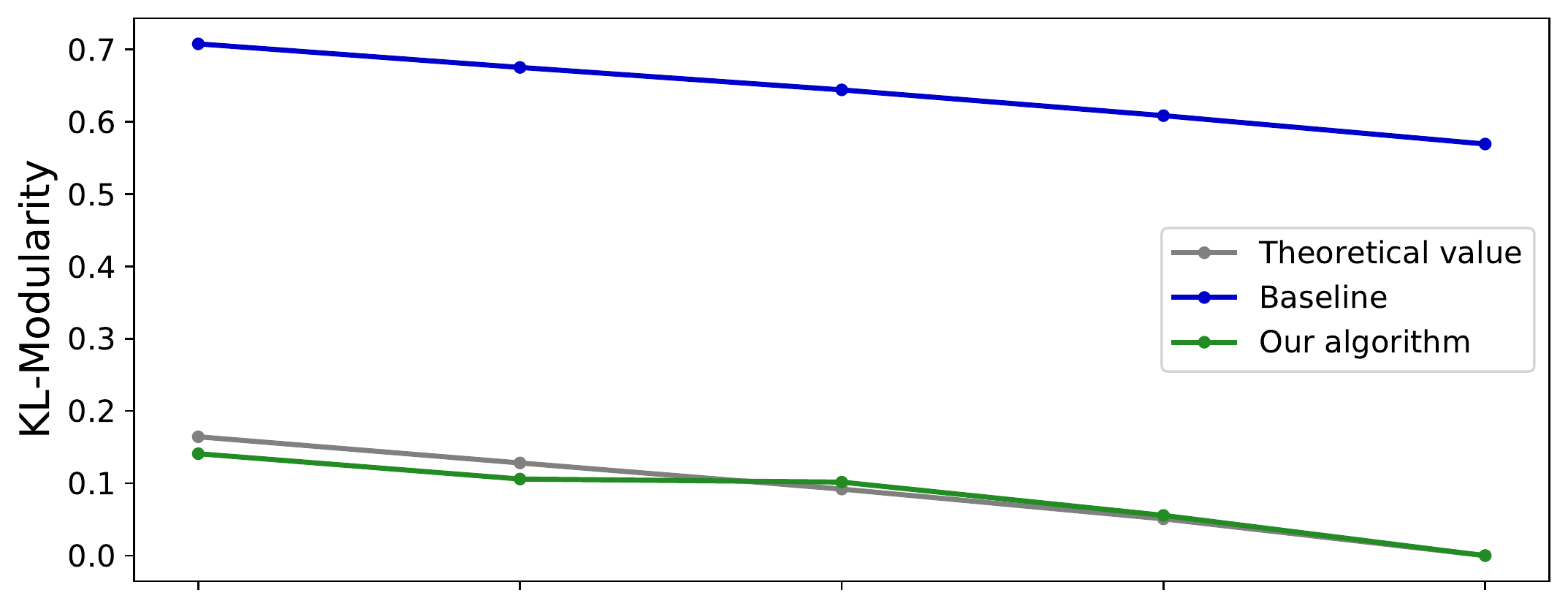}\\
                \null\hspace{0.045\textwidth}
                \includegraphics[width=0.18\textwidth]{mat/mat_3_0.pdf}
                \includegraphics[width=0.18\textwidth]{mat/mat_3_1.pdf}
                \includegraphics[width=0.18\textwidth]{mat/mat_3_2.pdf}
                \includegraphics[width=0.18\textwidth]{mat/mat_3_3.pdf}
                \includegraphics[width=0.18\textwidth]{mat/mat_3_4.pdf}
        \end{minipage}
    }
    \subcaptionbox{$\alpha$=0.4}{
        \begin{minipage}{0.5\textwidth}
            \includegraphics[width=\textwidth]{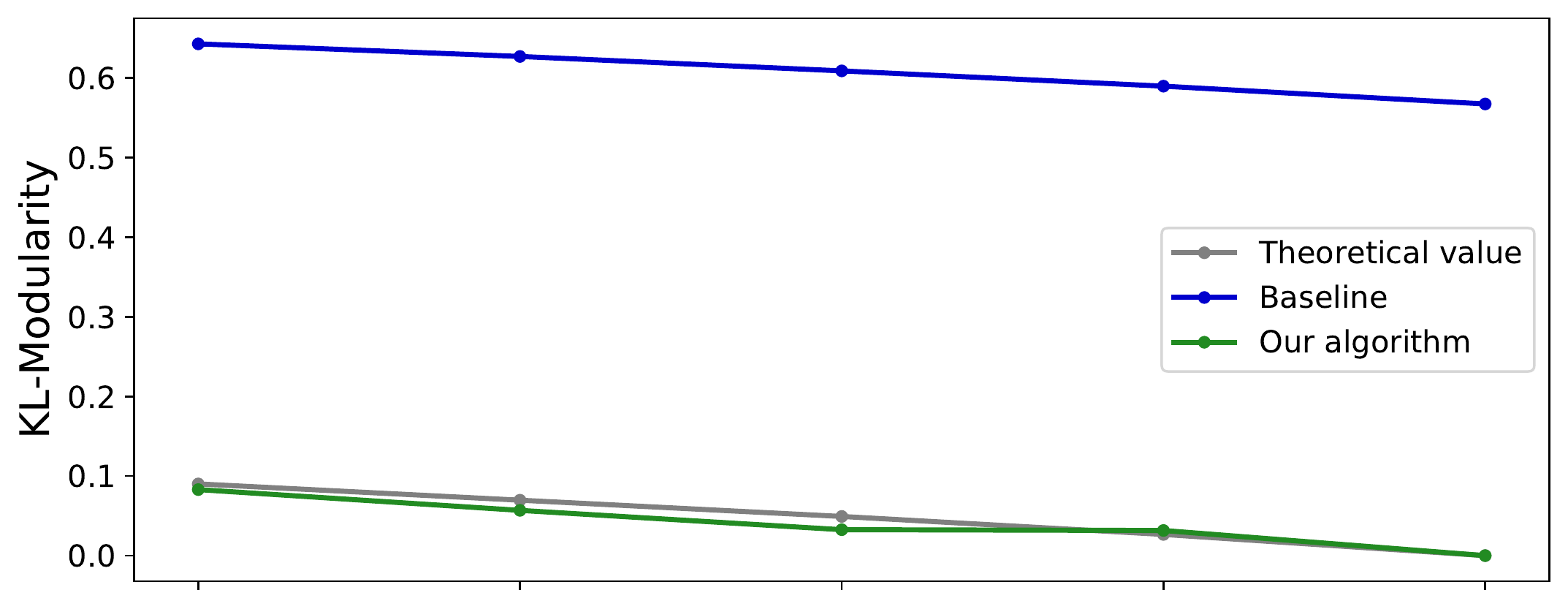}\\
                \null\hspace{0.045\textwidth}
                \includegraphics[width=0.18\textwidth]{mat/mat_4_0.pdf}
                \includegraphics[width=0.18\textwidth]{mat/mat_4_1.pdf}
                \includegraphics[width=0.18\textwidth]{mat/mat_4_2.pdf}
                \includegraphics[width=0.18\textwidth]{mat/mat_4_3.pdf}
                \includegraphics[width=0.18\textwidth]{mat/mat_4_4.pdf}
        \end{minipage}
    }
    \subcaptionbox{$\alpha$=0.5}{
        \begin{minipage}{0.5\textwidth}
            \includegraphics[width=\textwidth]{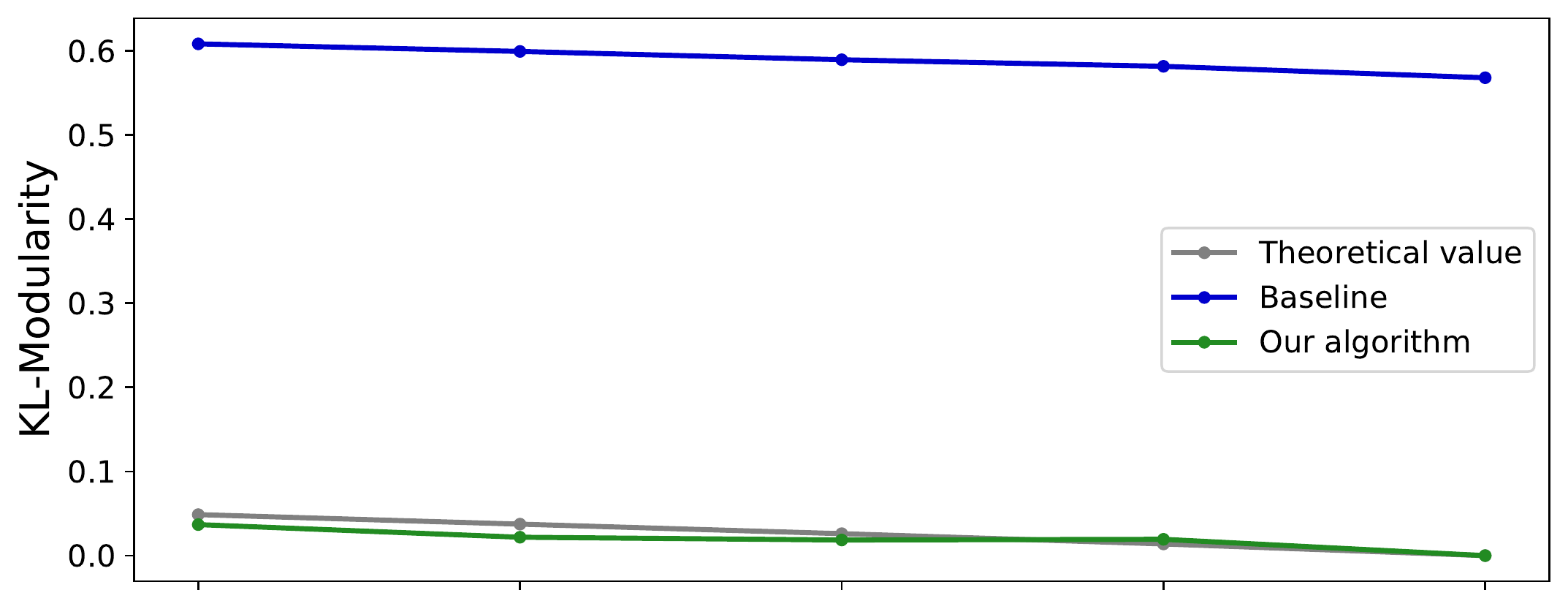}\\
                \null\hspace{0.045\textwidth}
                \includegraphics[width=0.18\textwidth]{mat/mat_5_0.pdf}
                \includegraphics[width=0.18\textwidth]{mat/mat_5_1.pdf}
                \includegraphics[width=0.18\textwidth]{mat/mat_5_2.pdf}
                \includegraphics[width=0.18\textwidth]{mat/mat_5_3.pdf}
                \includegraphics[width=0.18\textwidth]{mat/mat_5_4.pdf}
        \end{minipage}
    }
    \subcaptionbox{$\alpha$=0.6}{
        \begin{minipage}{0.5\textwidth}
            \includegraphics[width=\textwidth]{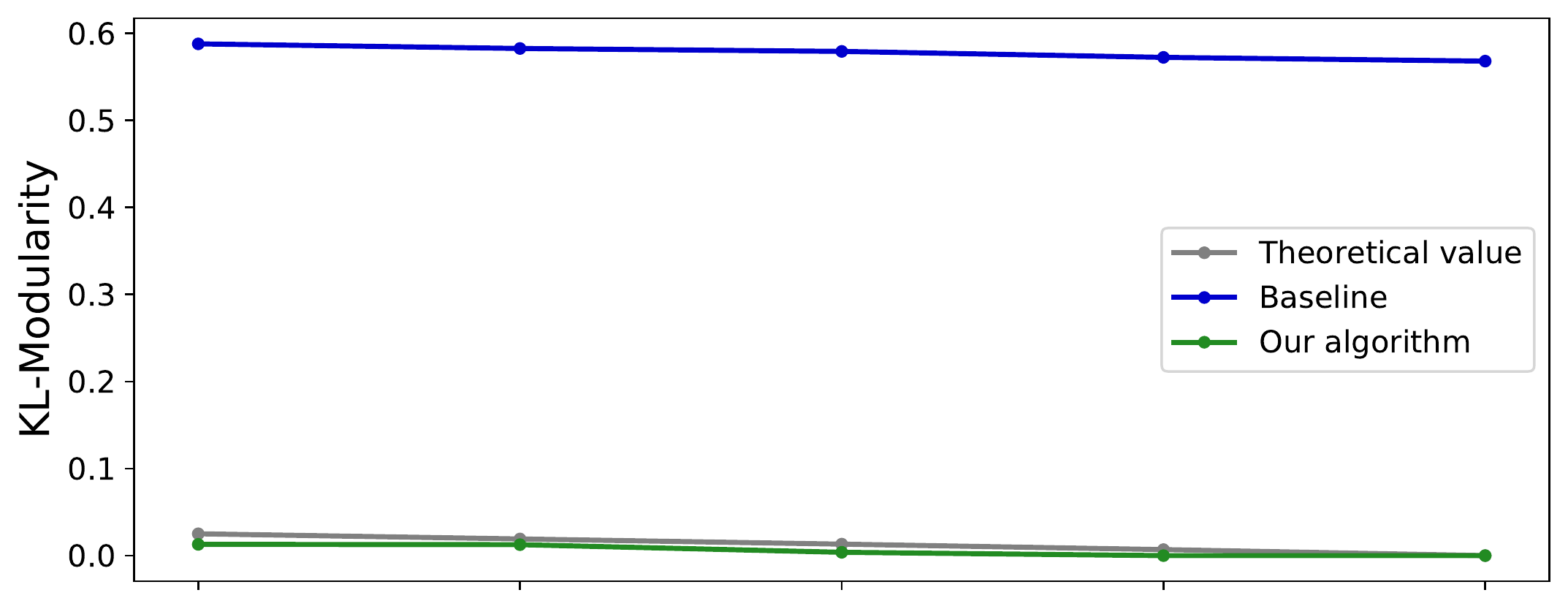}\\
                \null\hspace{0.045\textwidth}
                \includegraphics[width=0.18\textwidth]{mat/mat_6_0.pdf}
                \includegraphics[width=0.18\textwidth]{mat/mat_6_1.pdf}
                \includegraphics[width=0.18\textwidth]{mat/mat_6_2.pdf}
                \includegraphics[width=0.18\textwidth]{mat/mat_6_3.pdf}
                \includegraphics[width=0.18\textwidth]{mat/mat_6_4.pdf}
        \end{minipage}
    }
    \caption{\textbf{KL-Modularity under Community Contraction.} The corresponding KL-mutual information is Shannon mutual information. Other settings are the same as in Figure~\ref{fig:jsmodularity-short}.}
    \label{fig:klmodularity}
\end{figure}

\newpage

\section{Experimental Results on Pearson-Modularity}

\begin{figure}[h!]
    \subcaptionbox{$\alpha$=0.1}{
        \begin{minipage}{0.5\textwidth}
            \includegraphics[width=\textwidth]{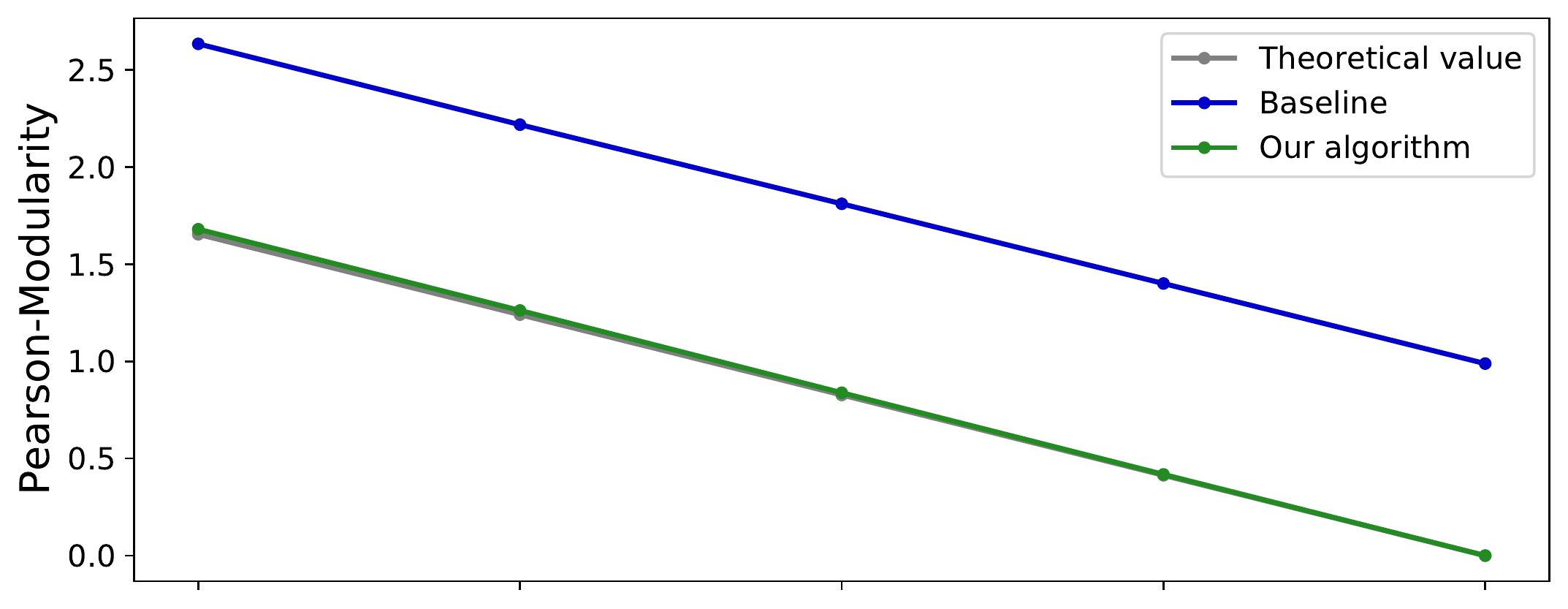}\\
                \null\hspace{0.045\textwidth}
                \includegraphics[width=0.18\textwidth]{mat/mat_1_0.pdf}
                \includegraphics[width=0.18\textwidth]{mat/mat_1_1.pdf}
                \includegraphics[width=0.18\textwidth]{mat/mat_1_2.pdf}
                \includegraphics[width=0.18\textwidth]{mat/mat_1_3.pdf}
                \includegraphics[width=0.18\textwidth]{mat/mat_1_4.pdf}
        \end{minipage}
    }
    \subcaptionbox{$\alpha$=0.2}{
        \begin{minipage}{0.5\textwidth}
            \includegraphics[width=\textwidth]{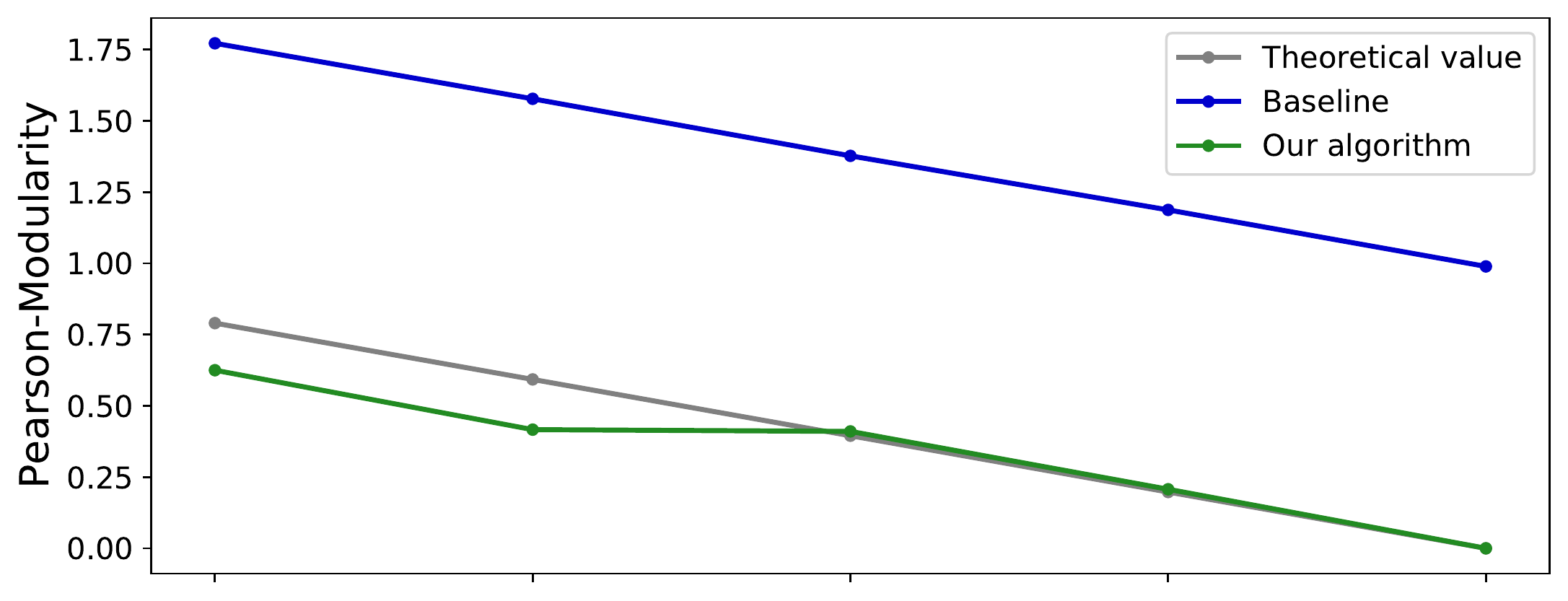}\\
                \null\hspace{0.045\textwidth}
                \includegraphics[width=0.18\textwidth]{mat/mat_2_0.pdf}
                \includegraphics[width=0.18\textwidth]{mat/mat_2_1.pdf}
                \includegraphics[width=0.18\textwidth]{mat/mat_2_2.pdf}
                \includegraphics[width=0.18\textwidth]{mat/mat_2_3.pdf}
                \includegraphics[width=0.18\textwidth]{mat/mat_2_4.pdf}
        \end{minipage}
    }
    \subcaptionbox{$\alpha$=0.3}{
        \begin{minipage}{0.5\textwidth}
            \includegraphics[width=\textwidth]{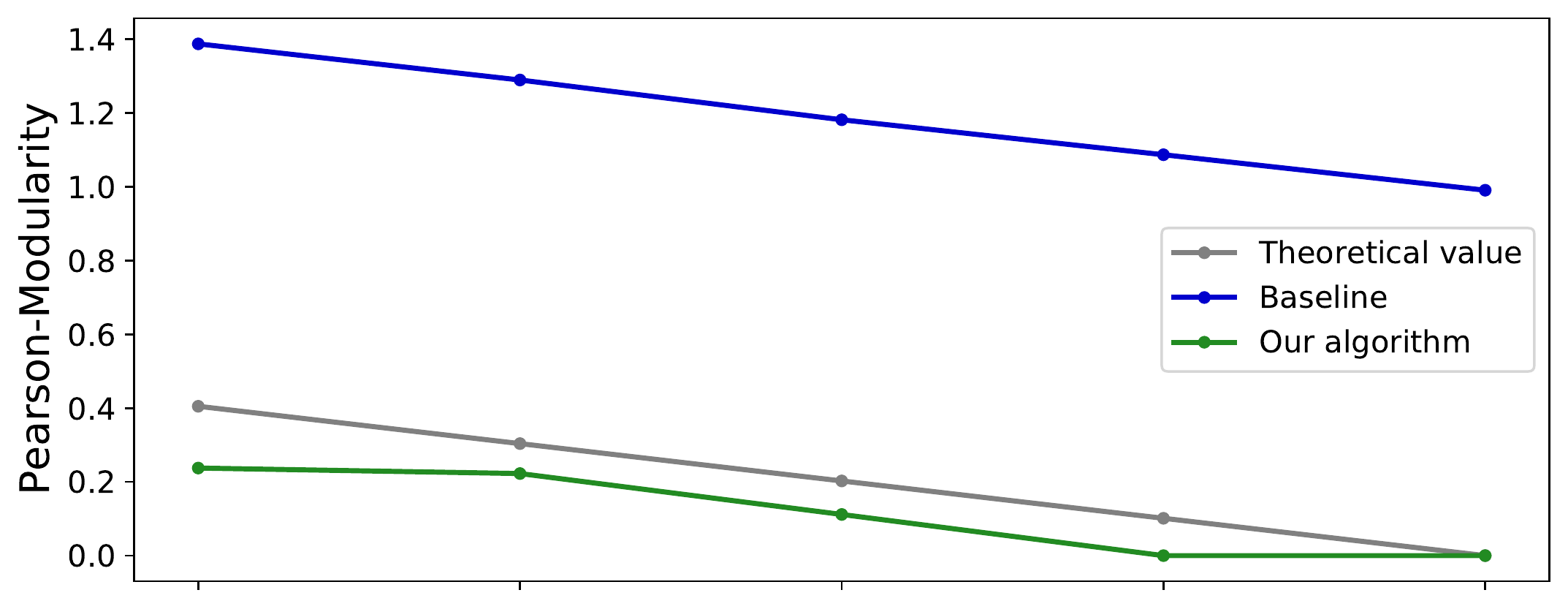}\\
                \null\hspace{0.045\textwidth}
                \includegraphics[width=0.18\textwidth]{mat/mat_3_0.pdf}
                \includegraphics[width=0.18\textwidth]{mat/mat_3_1.pdf}
                \includegraphics[width=0.18\textwidth]{mat/mat_3_2.pdf}
                \includegraphics[width=0.18\textwidth]{mat/mat_3_3.pdf}
                \includegraphics[width=0.18\textwidth]{mat/mat_3_4.pdf}
        \end{minipage}
    }
    \subcaptionbox{$\alpha$=0.4}{
        \begin{minipage}{0.5\textwidth}
            \includegraphics[width=\textwidth]{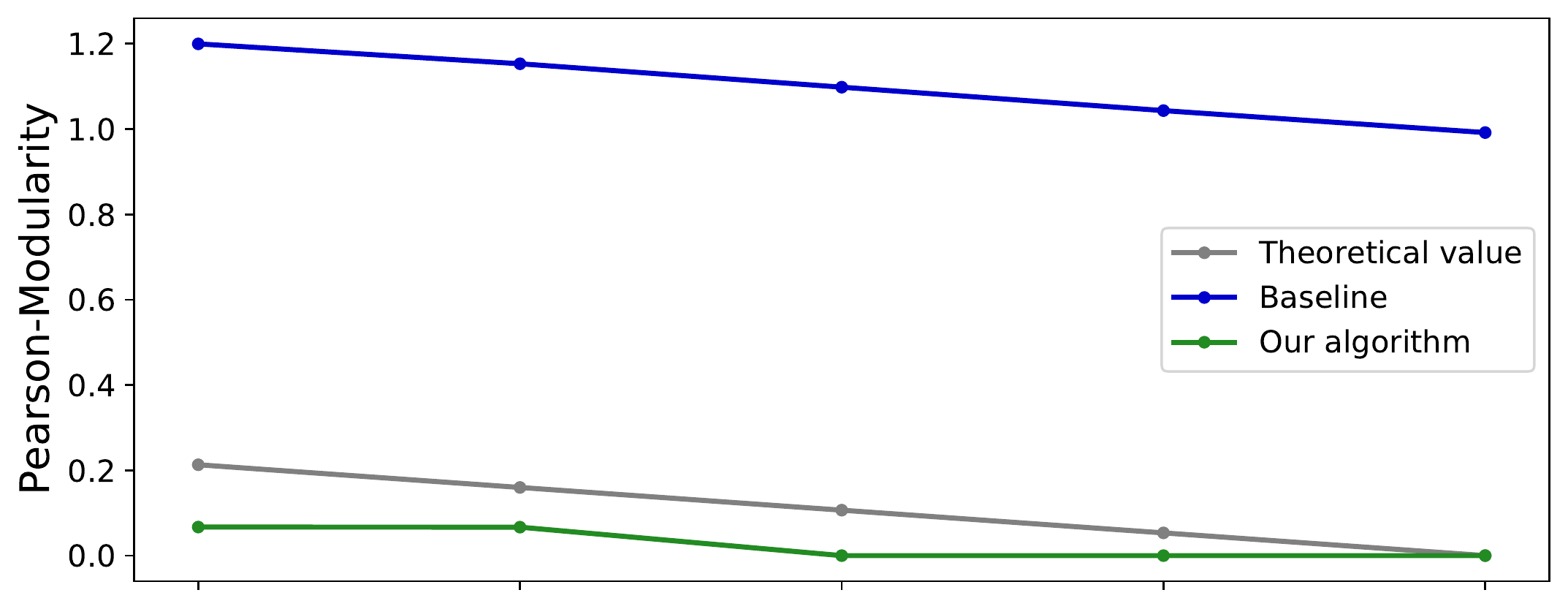}\\
                \null\hspace{0.045\textwidth}
                \includegraphics[width=0.18\textwidth]{mat/mat_4_0.pdf}
                \includegraphics[width=0.18\textwidth]{mat/mat_4_1.pdf}
                \includegraphics[width=0.18\textwidth]{mat/mat_4_2.pdf}
                \includegraphics[width=0.18\textwidth]{mat/mat_4_3.pdf}
                \includegraphics[width=0.18\textwidth]{mat/mat_4_4.pdf}
        \end{minipage}
    }
    \subcaptionbox{$\alpha$=0.5}{
        \begin{minipage}{0.5\textwidth}
            \includegraphics[width=\textwidth]{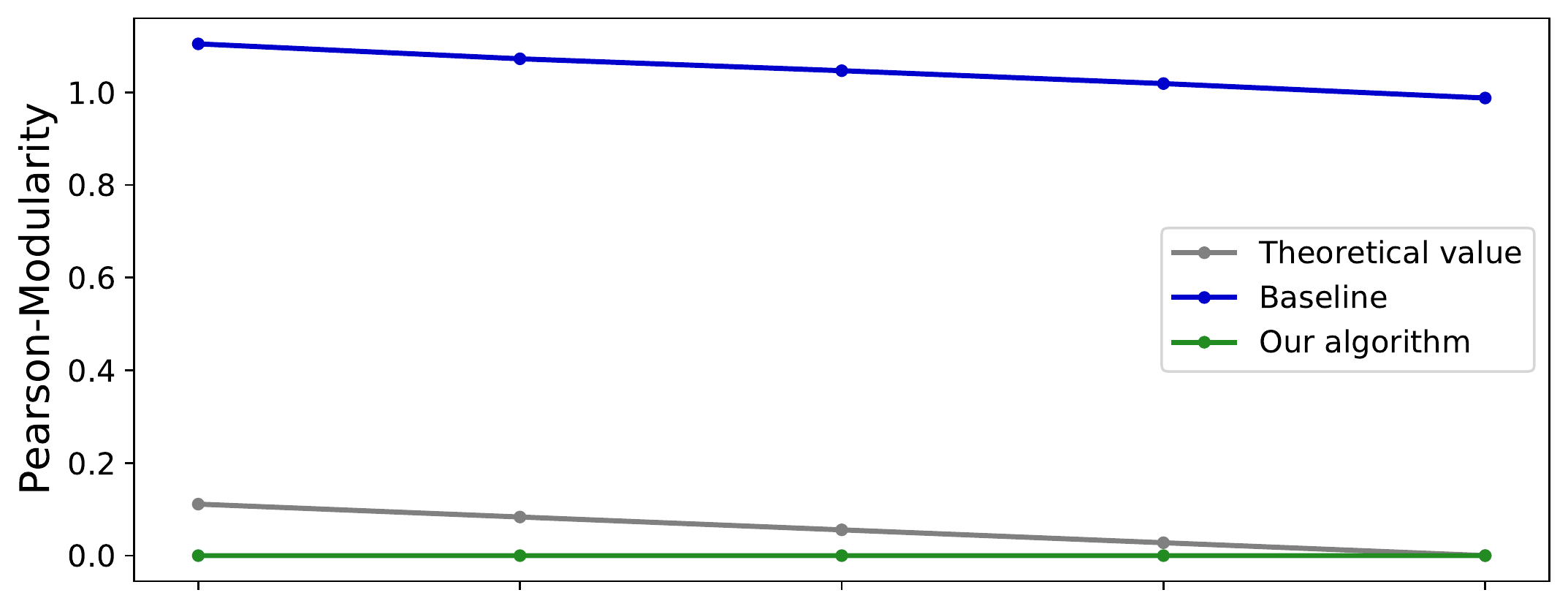}\\
                \null\hspace{0.045\textwidth}
                \includegraphics[width=0.18\textwidth]{mat/mat_5_0.pdf}
                \includegraphics[width=0.18\textwidth]{mat/mat_5_1.pdf}
                \includegraphics[width=0.18\textwidth]{mat/mat_5_2.pdf}
                \includegraphics[width=0.18\textwidth]{mat/mat_5_3.pdf}
                \includegraphics[width=0.18\textwidth]{mat/mat_5_4.pdf}
        \end{minipage}
    }
    \subcaptionbox{$\alpha$=0.6}{
        \begin{minipage}{0.5\textwidth}
            \includegraphics[width=\textwidth]{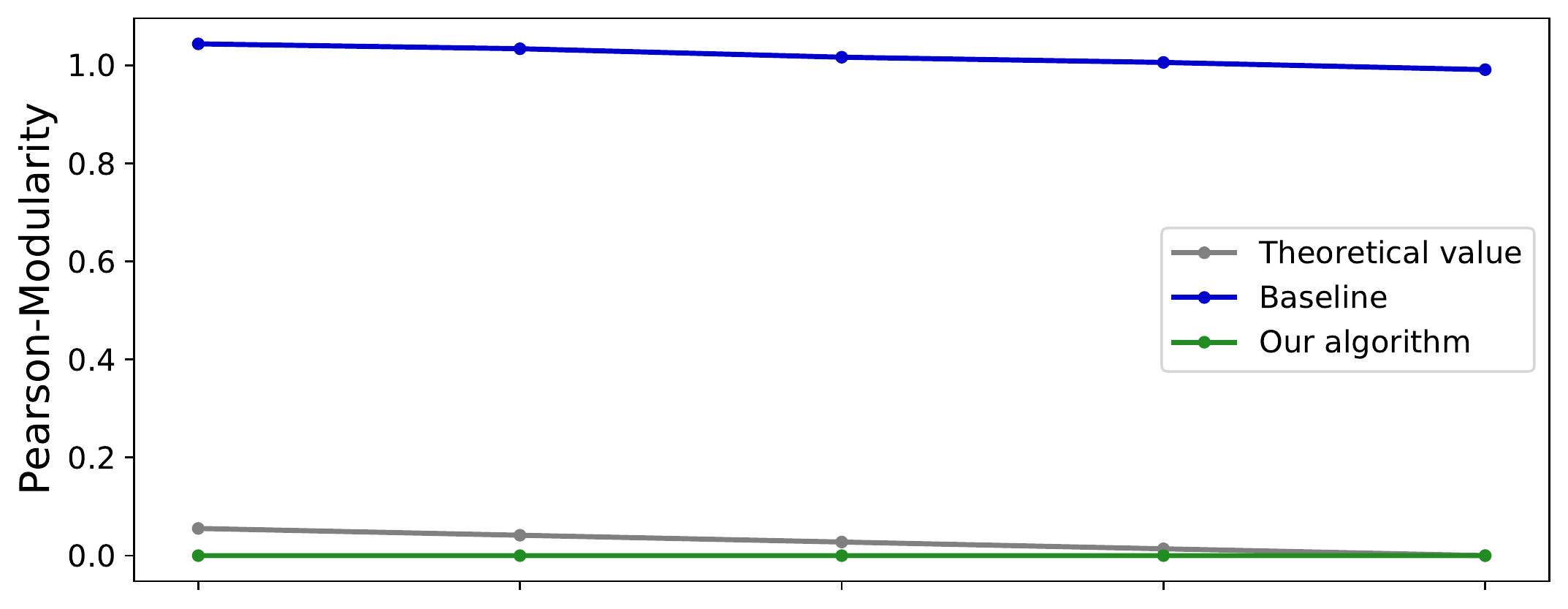}\\
                \null\hspace{0.045\textwidth}
                \includegraphics[width=0.18\textwidth]{mat/mat_6_0.pdf}
                \includegraphics[width=0.18\textwidth]{mat/mat_6_1.pdf}
                \includegraphics[width=0.18\textwidth]{mat/mat_6_2.pdf}
                \includegraphics[width=0.18\textwidth]{mat/mat_6_3.pdf}
                \includegraphics[width=0.18\textwidth]{mat/mat_6_4.pdf}
        \end{minipage}
    }
    \caption{\textbf{Pearson-Modularity under Community Contraction.} As Pearson-divergence does not require the non-negativity of $D_{u,v}$, we can use SVD for the low-rank approximation. Other settings are the same as in Figure~\ref{fig:jsmodularity-short}.}
    \label{fig:pearsonmodularity}
\end{figure}

\end{document}